\newtheorem{theorem}{Theorem}
\newtheorem{lemma}{Lemma}
\newtheorem{corollary}{Corollary}
\newtheorem{example}{Example}
\newcommand{\set}[1]{\left\{ #1 \right\}}
\newcommand{\rpar}[1]{\left( #1 \right)}
\newcommand{\abs}[1]{\left| #1 \right|}
\def\e#1{\emph{#1}}
\def\A{\mathbf{A}}
\def\P{\mathbf{P}}
\def\S{\mathcal{S}}
\def\D{\Delta} % distance between scores of two committees (formerly \text{D})
\def\MD{\overline{\Delta}} % max distance to any other committee (formerly \textMD})
\def\Ck{\mathcal{C}_k}
\def\comp{\mathcal{C}} % set of all completions
\def\top{\text{Top}}
\def\middle{\text{Middle}}
\def\bottom{\text{Bottom}}
\def\neccom{\mathsf{NecCom}}
\def\poscom{\mathsf{PosCom}}
\def\necmem{\mathsf{NecMem}}
\def\posmem{\mathsf{PosMem}}
\def\necjr{\mathsf{NecJR}}
\def\posjr{\mathsf{PosJR}}
\def\necpjr{\mathsf{NecPJR}}
\def\pospjr{\mathsf{PosPJR}}
\def\necejr{\mathsf{NecEJR}}
\def\posejr{\mathsf{PosEJR}}
\def\necpjrp{\mathsf{NecPJR+}}
\def\pospjrp{\mathsf{PosPJR+}}
\def\necejrp{\mathsf{NecEJR+}}
\def\posejrp{\mathsf{PosEJR+}}
\def\angs#1{\mathord{\langle{#1\rangle}}}
\def\neccompar#1{\neccom\angs{#1}}
\def\poscompar#1{\poscom\angs{#1}}
\def\necmempar#1{\necmem\angs{#1}}
\def\posmempar#1{\posmem\angs{#1}}
\def\necpjrpar#1{\necpjr\angs{#1}}
\def\pospjrpar#1{\pospjr\angs{#1}}
\def\necejrpar#1{\necejr\angs{#1}}
\def\posejrpar#1{\posejr\angs{#1}}
\newcommand{\rrule}{\ensuremath{r}} % a voting rule
\newcommand{\tsc}{\mathrm{sc}} % Thiele-score
\newcommand{\naturals}{{{\mathbb{N}}}}
\newcommand{\rationals}{{{\mathbb{Q}}}}
\newcommand{\reals}{{{\mathbb{R}}}}
\newcommand{\abc}{ABC\xspace}
\newcommand{\thiele}{Thiele\xspace}
\newcommand{\pav}{PAV\xspace}
\newcommand{\cc}{CC\xspace}
\newcommand{\av}{AV\xspace}
\newcommand{\ie}{i.e.,\ }
\newcommand{\eg}{e.g.,\ }
\newcommand{\wilog}{w.l.o.g.\ }
\newcommand{\wrt}{w.r.t.\ }
\newcommand{\eat}[1]{}
\def\aviram#1{{\color{green}\textbf{AI:} #1}}
\def\jonas#1{{\color{teal}\textbf{JI:} #1}}
\def\benny#1{{\color{blue}BK: #1}}
\def\aviram#1{}
\def\jonas#1{}
\def\benny#1{}
\def\mb#1{}
\begin{document}

\title{Approval-Based Committee Voting under Incomplete Information}

\author[1]{\large Aviram Imber}
\author[2]{\large Jonas Israel}
\author[2,3]{\large Markus Brill}
\author[1]{\large Benny Kimelfeld}

\affil[1]{\normalsize Technion -- Israel Institute of Technology, Haifa, Israel}
\affil[2]{\normalsize Research Group Efficient Algorithms, TU Berlin, Germany}
\affil[3]{\normalsize Department of Computer Science, University of Warwick, UK}

\sloppy

\date{}

\maketitle

\begin{abstract}
  We investigate approval-based committee voting with incomplete information about the approval preferences of voters. We consider several models of incompleteness where each voter partitions the set of candidates into \e{approved},
  \e{disapproved}, and \e{unknown} candidates, possibly
  with ordinal preference constraints among candidates in the latter category. 
  This captures scenarios where voters have not evaluated all candidates and/or it is unknown where voters draw the threshold between approved and disapproved candidates. 
  We study the complexity of some fundamental computational problems for a number of classic approval-based committee voting rules including Proportional Approval Voting and Chamberlin--Courant.
  These problems include determining whether a given set of candidates is a possible or necessary winning committee and whether a given candidate is possibly or necessarily a member of the winning committee.
  We also consider proportional representation axioms and the problem of deciding whether a given committee is possibly or necessarily representative. 
\end{abstract}

\section{Introduction}
Approval-based committee (ABC) voting represents a well-studied multiwinner election setting, where a subset of candidates of a predetermined size, a so-called \e{committee}, needs to be chosen based on the approval preferences of a set of voters \citep{LaSk20a}.
Traditionally, ABC voting is studied in the context where we know, for each voter and each candidate, whether the voter approves the candidate or not. 
In this paper, we investigate the situation where the approval information is incomplete. Specifically, we assume that each voter is associated with a set of \e{approved} candidates, a set of \e{disapproved} candidates, and a set of candidates where the voter's stand is unknown, hereafter referred to as the \e{unknown} candidates. Moreover, we may have (partial) ordinal information on voters' preferences among the unknown candidates, restricting the ``valid'' completions of voters' approval sets.

When the number of candidates is large, unknown candidates are likely to exist because voters are not aware of or not familiar with, and therefore cannot evaluate, all candidates. 
In particular, this holds in scenarios where candidates join the election over time, and voter preferences over new candidates have not been elicited \citep{CLM+12a}. Distinguishing between disapproved and unknown candidates accounts for two fundamentally different reasons for non-approvals: either the voter has evaluated the candidate and judged him or her not worth approving (in which case the candidate counts as disapproved), or the voter has not evaluated the candidate (in which case the candidate counts as unknown).
Furthermore, incorporating (partial) ordinal preferences among unknown candidates allows us to model situations in which we (partially) know how a voter rank-orders the candidates, but we do not know where the voter draws his or her ``approval threshold.''   

Scenarios involving incomplete knowledge about approval preferences arise naturally in a variety of practical settings. 
For example, in a scenario where we retrieve information from indirect sources such as social media (say, for the sake of prediction), we may get only sparse information about approval (``Vote for $X$'') and disapproval
(``Definitely not Y''), and possibly pairwise preferences due to explicit statements (e.g., ``$X$ is at least better than $Y$''). 
As another example, in shortlisting scenarios (such as faculty hiring) some voters may know with sufficient confidence that they support or oppose some candidates (e.g., the ones from their own field of expertise) but have no clear opinion on others. 
This situation also naturally arises when labeling documents for information retrieval, where the committee corresponds to the page of search results: some documents are clearly relevant, some clearly irrelevant, and some are unclear. For the unclear ones, it may be way easier to rank the documents (totally or partially)  by relevance rather than to insist on a complete classification into relevant and irrelevant; in fact, this is the motivation behind some successful methodologies for learning ranking functions (\e{learning to rank}) such as the pairwise and listwise approaches~\citep{DBLP:conf/icml/CaoQLTL07}.

Our basic model of incompleteness is the \e{poset approval} model that is illustrated in Figure~\ref{fig:model_incomplete}(a).  This model is a rather direct generalization of the voter model in the seminal work of  \citet{KoLa05a}, who study problems of winner determination with incomplete preferences under ranking-based single-winner rules (such as plurality and Borda). 
For each voter, we are given a set of \e{approved} candidates and a set of \e{disapproved} candidates, together with a partial order over the remaining (``unknown'') candidates that constrains the possible approval ballots of the voter: 
if an unknown candidate is preferred to another unknown candidate, then the former needs to be approved by the voter whenever the latter is approved.
In other words, each possible world is obtained by selecting a linear extension of the partial order and determining a cutoff point---every candidate before the cutoff is approved, and every candidate after the cutoff is disapproved (in addition to the known approvals and disapprovals, respectively).  We study in depth two special cases of this model that correspond to the two extremes of posets: zero information (3VA) and full information (linear).
\begin{figure}[t]
    \centering
    \subfigure[Poset approval]{\label{fig:poset} \includegraphics[width=0.4\textwidth]{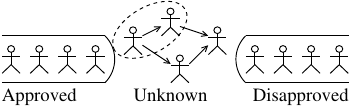}  }\\
    \subfigure[Three-valued approval (3VA)]{\label{fig:3vl} \includegraphics[width=0.4\textwidth]{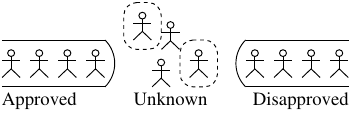} }
    \hskip4em
    \subfigure[Linear incomplete approval]{\label{fig:linear} \includegraphics[width=0.4\textwidth]{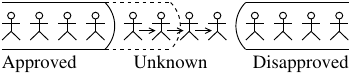} }
    \caption{Models of incomplete approval profiles: Poset approval and the special cases of 3VA and linear incomplete approval. Dashed frames depict candidates that are approved in a valid completion.}
    \label{fig:model_incomplete}
\end{figure}
\begin{itemize}
    \item In the \e{three-valued approval} (3VA) model,\footnote{The term is analogous to ``Three-Valued Logic'' (3VL) that is adapted, e.g., in SQL~\citep{DBLP:journals/tods/Libkin16}, where the three values are \e{true}, \e{false} and \e{unknown}.} illustrated in Figure~\ref{fig:model_incomplete}(b), the partial order of the poset is empty. Hence, every subset of unknown candidates determines a valid possible world where this set is approved and its complement is disapproved.
    \item In the \e{linear} model, illustrated in Figure~\ref{fig:model_incomplete}(c), the unknown candidates are ordered linearly. Hence, every prefix of this order determines a possible world where the candidates in this prefix are approved and the ones in the remaining suffix are disapproved. 
\end{itemize}
From the computational perspective, the models are fundamentally different. For instance, a voter can have an exponential number of possible worlds (or \emph{completions}) in one case, but only a linear number in the other. 
Of course, when a problem is tractable in the poset approval model, then it is also tractable in the 3VA and linear models. On the contrapositive,  whenever a problem is intractable (e.g., NP-hard) in one of these two models, it is also intractable for the general poset approval model. However, as we illustrate later, a problem may be tractable in 3VA and intractable in the linear model, and vice versa.

We investigate the computational complexity of fundamental problems that arise in ABC voting with incomplete approval profiles. In the first problem, the goal is to determine whether a given set $W$ of candidates is a \e{possible} winning committee. In the second problem, the goal is to determine whether such given $W$ is a \e{necessary} winning committee. More formally, a \e{possible committee} (respectively, \e{necessary committee}) is a set of candidates that is a winning committee in \e{some} completion (respectively, \e{all} completions) of the incomplete approval profile. 
We also achieve some results on the problem where we are given a candidate and the goal is to determine whether the candidate is possibly or necessarily a member of a winning committee. 

In the applications described earlier it is often enough to find only one winning committee, e.g., because we only want to have one set of top search results and do not care about there being other, equally good such sets. Thus finding a necessary winning committee is enough and we do not need to elicit any more information. Conversely, if for example in a hiring process we know that some candidates are not in any possible committee any more, we can already inform them that they will definitely not get the position and we can stop eliciting preferences over those candidates.

We consider a class of voting rules that was introduced by 
\citet{Thie95a}. This broad class of approval-based committee voting rules contains several classic rules such as Approval Voting (AV),  Chamberlin--Courant (CC), and Proportional Approval Voting (PAV).
For all Thiele rules except AV, the problem of determining winning committees is intractable even before we allow for incompleteness~\citep{LaSk20a}. Consequently, computing either possible or necessary committees is intractable as well, under each of our models of incompleteness. Therefore, we focus on the case where the committee size is small (i.e., constant). In this case, winning committees can be computed in polynomial time for all considered rules.

Finally, we study questions about proportional representation of voters given incomplete approvals. We consider the property of Justified Representation (JR), which requires that all voter sets that are large enough and cohesive enough are represented in some way in the committee~\cite{ABC+16a}.
We investigate the complexity of deciding, given an incomplete approval profile, whether it is possible or necessary for a set of candidates to form a committee that satisfies JR. By ``possible'' we mean that JR is satisfied in at least one completion, whereas ``necessary'' means that JR is satisfied in every completion.  We show that both possibility and necessity problems are solvable in polynomial time, even if the committee size is not assumed to be fixed. 
We also consider the two stronger notions of Proportional Justified Representation (PJR)~\cite{SFF+17a} and Extended Justified Representation (EJR)~\cite{ABC+16a}, and their more recent stronger versions PJR+ and EJR+~\cite{BrPe23a}. 
For PJR+ and EJR+, we show that the necessity problems are solvable in polynomial time, again without assuming that the committee size is fixed; this is also the case for the possibility problems under the 3VA model, while the cases of the other two models remain open.
We also present polynomial-time algorithms for the necessity problems of EJR and PJR, assuming that the committee size is fixed;
the complexity of the possibility problems remain open in all three models of incompleteness.

\paragraph{Overview.}
The remainder of this paper is structured as follows. We discuss related work in \Cref{sec:abcu_related} and define relevant notation and voting rules in \Cref{sec:abcu_prelims}. We formally introduce the models of uncertainty and the computational problems we consider in \Cref{sec:models}. \Cref{sec:poscom,sec:neccom} contain our results for computing possible and necessary winning committees, respectively. Finally, we consider proportional representation axioms in \Cref{sec:representation}, before \Cref{sec:conclusion} concludes. 

\section{Related Work}\label{sec:abcu_related}

There are at least three streams of literature to which our work is related. 

\paragraph{Possible and necessary winners.}
The poset model, one of our three models of incompleteness described above, is a rather direct generalization of the voter model in the work of  \citet{KoLa05a}, who study problems of winner determination with incomplete preferences under ranking-based singlewinner rules (such as plurality and Borda) and introduced the concepts of possible and necessary winners. 
A recent overview of the literature on possible and necessary winner problems as well as related concepts is given by \citet{Lang20a}. We only consider works that are particularly close to our approaches here.
Following the paper by \citet{KoLa05a}, a complete dichotomy of the complexity of the possible and necessary winner problems for partial orders has been established in a series of papers~\citep{DBLP:journals/ipl/BaumeisterR12,DBLP:journals/jcss/BetzlerD10,DBLP:journals/jair/XiaC11}. In these papers it is shown that under every positional scoring rule in the setting of partial orders, the necessary winners can be found in polynomial time, yet it is NP-complete to decide whether a candidate is a possible winner (assuming a regularity condition that the rule is pure), except for the tractable cases of the plurality and veto rules. 
\citet{CLM+12a} study another variant of the possible and necessary winner problem \wrt scoring rules. They consider a setting with (complete) ordinal ballots where new candidates are added and the question is whether the initial candidates are possible or necessary winners. 
Possible and necessary winners under \emph{truncated ballots} are also somewhat related to our model of incomplete approval preferences. Under truncated ballots, voters only give a ranking over the most preferred candidates instead of ranking all candidates. Possible and necessary winner problems given these kinds of partial ballots are studied by \citet{BFLR12a}, \citet{KKKG11a}, and \citet{ayadi2019single}.

\paragraph{Approval voting with incomplete information.}
Particularly close to our setting is the work by \citet{BGL+13a}. Here, possible and necessary winner problems for singlewinner \av have been studied. They use a preference model in which every voter is assumed to have a total ranking of the candidates, but it is unknown where each voter positions their ``approval threshold''. The authors also extend some of their results to the multiwinner setting. These results overlap with some of our results for the linear model: \citet{BGL+13a} only consider the linear model and show that the possible winner problem is NP-complete for AV. We discuss specific relationships between their work and ours in Section~\ref{sec:poscom}. 
\citet{benabbou2016solving} also consider possible and necessary winner problems under approval preferences. The main difference to our work is that in this work the authors consider knapsack voting and assume preferences by voters over different knapsacks (which they elicit incrementally), while we consider preferences over candidates (\ie over single items in the case of knapsack voting).
\citet{terzopoulou2021restricted} consider restricted domains for approval voting under uncertainty using a model of incompleteness corresponding to our 3VA model.
Another approach to \abc voting with incomplete preference information is the recent paper by \citet{HKP+23a}. In their setting, individual voters are queried on their approval preferences over a subset of the candidates and the goal is to construct query procedures that find representative committees (\wrt to the unknown complete approval profile).

\paragraph{Trichotomous preferences.}
Allowing voters to not only specify which candidates they approve and disapprove, but also for which they are undecided, can be seen as eliciting trichotomous preferences (without uncertainty). Under trichotomous preferences, voters note for each candidate whether they approve the candidate, are indifferent about it, or disapprove of the candidate. So while dichotomous preferences are equivalent to a partition of the candidates into two set, trichotmous preferences correspond to a partition into three sets. \citet{Fels89a} compared \av over dichotomous preferences with its natural extension\footnote{For \av under trichotomous preferences, we count each approval by a voter as one point for a candidate but deduct one point for each disapproval. The candidate(s) with the most aggregated points are the winners.} to trichotomous preferences. \citet{AlLa14a} characterize \av on trichotomous preferences. The minimax-\av rule under trichotomous preferences (among other preferences) is studied by \citet{BaDe15b}. \citet{ZYG19a} investigate the computational complexity of winner determination for \abc rules such as \pav and \cc under trichotomous preferences. Finally, proportionality axioms for trichotomous preferences (including adapted versions of JR and PJR) are studied by \citet{talmon2021proportionality}.

\section{Preliminaries}
\label{sec:abcu_prelims}
We consider a voting setting where a finite set $V = \set{1, \ldots, n}$ of \emph{voters} has preferences over a finite set $C$ of \emph{candidates}. We usually use $n$ for the number of voters while the number of candidates is denoted by $m$.
An \emph{approval profile} $\A = (A(1), \dots, A(n))$ lists the approval sets of the voters, where $A(i) \subseteq C$ denotes the set of candidates that are approved by voter $i$. The concatenation of two approval profiles $\A_1=(A_1(1), \ldots, A_1(p))$ and $\A_2=(A_2(1), \ldots, A_2(q))$ is denoted by $\A_1 \circ \A_2 \coloneqq (A_1(1), \ldots, A_1(p), A_2(1), \ldots, A_2(q))$. An \e{approval-based committee (ABC) rule} takes an approval profile as input and outputs one or more size-$k$ subsets of candidates, so called \emph{committees}, for a given parameter $k \in \naturals$.
We refer to the committee(s) output by a rule $r$ as the \textit{winning committee(s)} (w.r.t.\ the rule $r$ and committee size $k$). 

Most of our work is focused on a class of ABC rules known as \e{Thiele rules} \citep{Thie95a,BLS18a}. These rules are parameterized by a \textit{weight function} $w$, i.e., a non-decreasing function $w \colon \naturals \rightarrow \rationals_{\ge 0}$ with $w(0) = 0$. The score that a voter~$v$ contributes to a subset $S \subseteq C$ is defined as $w(|S \cap A(v)|)$ and the score of $S$ is $\sum_{v \in V} w(|S \cap A(v)|)$. The rule \textit{$w$-Thiele}  then outputs the subset(s) of size $k$ with the highest score. Two of the most famous examples of Thiele rules are \emph{Approval Voting (AV)},\footnote{We can also define this rule using scores of individual candidates: For a candidate $c$, the \textit{approval score} of $c$ is defined as $|\set{v \in V : c \in A(v)}|$. AV selects committees consisting of the $k$ candidates with highest approval scores.} where $w(x) = x$, and \textit{Proportional Approval Voting (PAV)}, where $w(x) = \sum_{i=1}^x 1/i$. We assume that $w(x)$ is computable in polynomial time in $x$, and that there exists an integer $x$ with $w(x) \neq 0$ (otherwise, all subsets of voters have the same score). Observe that for every \e{fixed} number $k$, we can find the committee(s) under $w$-Thiele in polynomial time by computing the score of every subset $W \subseteq C$ of size~$k$.

Without loss of generality, we assume that weight functions are normalized such that $w(x)=1$ for the smallest $x$ with $w(x)>0$. 
We call a Thiele rule \textit{binary} if $w(x)\in\{0,1\}$ for all $x \in \naturals$. Since $w$ is non-decreasing, binary Thiele rules are characterized by an integer $t$ such that $w(x) = 0$ for every $x < t$ and $w(x) = 1$ for all $x \geq t$. The most prominent binary Thiele rule is \textit{Chamberlin--Courant (CC)}, with $w(x) = 1$ for all $x \geq 1$.

For ease of notation we introduce the following. Given set of candidates $S \subseteq C$, an approval profile $\A = (A_1, \ldots, A_n)$ and a weight function $w$ of a Thiele rule, we denote the score that voter $v \in V$ contributes by $\tsc(A_v, S) = w(|S \cap A_v|)$, and we denote the total score of the profile by $\tsc(\A, S) = \sum_{v \in V} \tsc(A_v, S)$.

\section{Models of Incompleteness}
\label{sec:models}

In this section, we formally introduce the three models of incomplete information that we investigate
throughout the paper
(see \Cref{fig:model_incomplete} for an illustration).
We first define \emph{poset approval} as a general model, and then two models constituting special cases. 
In all models, a \emph{partial profile} $\P = (P(1), \dots, P(n))$ consists of $n$ \textit{partial votes}. 

\paragraph{Poset approval.}
For each voter $v \in V$, the partial vote $P_v$ consists of a partition $(\top(v), \middle(v), \bottom(v))$ of~$C$ into three sets and a partial order $\succeq_{v}$\footnote{We use the symbol $\succeq$ to refer to a partial order and not to a weak order (as is done regularly with this symbol). Recall that a \textit{partial order} is a reflexive, antisymmetric, and transitive binary relation.} over the candidates in $\middle(v)$. $\top(v)$ represents the candidates that voter $v$ approves in any case, $\bottom(v)$ represents the candidates that $v$ approves in no case.  
The partial order $\succeq_v$ represents approval constraints on the candidates in  $\middle(v)$, whose approval is open. Specifically, if $c \succeq_{v} c'$ and $v$ approves $c'$, then $v$ also approves $c$.
A \textit{completion} of $\P = (P(1), \dots, P(n))$ is an approval profile $\A = (A(1), \dots, A(n))$ where each $A(v)$ ``completes'' $P_v$. Namely, $\top(v) \subseteq A(v)$, $\bottom(i) \cap A(v) = \emptyset$, and for every pair $c, c' \in \middle(v)$ with $c \succeq_{v} c'$, it holds that $c \in A(v)$ whenever $c' \in  A(v)$. 
Equivalently, we can describe $\A$ in the following way: For each voter $v \in V$ select a subset $\middle_\A(i) \subseteq \middle(v)$ which ``respects'' $\succeq_v$ (\ie $c \succeq_v c'$ and $c' \in  \middle_\A(v)$ imply $c \in \middle_\A(v)$) and $v$ approves exactly $A(v) = \top(v) \cup \middle_\A(v)$.

\paragraph{Three-valued approval (3VA).}
This model is a special case of poset approval where for every voter $v \in V$, the partial order $\succeq_v$ over the candidates in $\middle(v)$ is empty (except for the reflexive part). In other words, $v$ might approve any subset of candidates in $\middle(v)$. Hence, a completion of $\P = (P(1), \dots, P(n))$ is a complete approval profile $\A = (A(1), \dots, A(n))$ such that for every voter $v \in V$ we have $\top(v) \subseteq A(v)$ and $\bottom(v) \cap A(v) = \emptyset$.

\paragraph{Linear incomplete approval.}
This model is a special case of poset approval where for every voter $v \in V$, $\succeq_v$ is a complete linear order $c_{i_1}  \succeq_v \dots \succeq_v c_{i_q}$ on the candidates in $\middle(v)$. To construct a completion $\A$, for every voter $v \in V$ we select a $j \leq |\middle(v)|$ such that $\middle_\A(v) = \set{c_{i_1}, \dots, c_{i_j}}$.

\bigskip
In the linear model of incompleteness, every partial vote has at most $m$ completions, as opposed to the previous two models where even a single partial vote can have $2^m$ completions. In all three models, the number of completions for a partial profile can be exponential in the number of voters~$n$.

\begin{table}[t]
    \centering
    \scalebox{1}{
        \begin{tabular}{cccc}
            \toprule
            voter $v$ & $\top(v)$ & $\middle(v)$ & $\bottom(v)$ \\
            \midrule
            1 & $a, b$ & $c$ & -- \\
            2 & $c$ & $a \rightarrow b$ & -- \\
            3 & $a$ & $c$ & $b$ \\
            4 & -- & $c \rightarrow b \rightarrow a$ & -- \\
            5 & $a, c$ & -- & $b$ \\
            \bottomrule
        \end{tabular}
    }
    \caption{Overview of the incomplete approval profile from \Cref{ex:poscom_neccom}. In every completion of the profile, each voter $v$ approves all candidates in $\top(v)$ and some (possibly empty) subset of $\middle(v)$. An arrow from candidate $a$ to candidate $b$ means that if $b$ is approved, so is $a$.}
    \label{tbl:example_poscom_neccom}
\end{table}

\paragraph{Possible and necessary committees.}
In order to define possible and necessary committees, fix an ABC rule $\rrule$.
Given a committee size $k$ and a partial profile $\P$, a set $W \subseteq C$ of $k$ candidates is a \emph{possible committee} if there exists a completion $\A$ of $\P$ where $W$ is a winning committee (under $\rrule$), and a \emph{necessary committee} if $W$ is a winning committee for every completion $\A$ of $\P$. In the following sections, we investigate the computational complexity of deciding whether a given set of candidates is a possible or necessary committee, for different rules $\rrule$.
We close this section with an example highlighting these definitions.

\begin{example}\label{ex:poscom_neccom}
Consider the instance depicted in \Cref{tbl:example_poscom_neccom} using the model of linear incomplete approvals. Here 5 voters have incomplete approval preferences over the three candidates $a,b$ and $c$. 
Under \av and for a committee size of $k = 2$, $\set{a,b}$ is a possible committee since it wins (via a tie with $\set{a,c}$) if voter 2 approves all three candidates and all other voters $i \neq 2$ approve only the candidates in $\top(i)$. Further, $\set{a,c}$ is a necessary committee since it wins in every completion of the profile. Finally, $\set{b,c}$ wins in no completion and thus is not a possible committee.
\end{example}

\section{Computing Possible Committees}\label{sec:poscom}

The first decision problem we study concerns possible committees.
For an ABC rule $\rrule$ and committee size $k$, consider the following decision problem that we denote by $\poscompar{k}$: 
Given a partial profile $\P$ and a subset $W\subseteq C$ of candidates of size $k$, 
decide whether $W$ is a possible committee.
As mentioned earlier, we parameterize the decision problem by $k$ since otherwise winner determination even in the case of no uncertainty is known to be NP-hard for all Thiele rules except \av \citep{SFL16a}.
Observe that $\poscompar{k}$ is in NP under all Thiele rules, for every fixed $k$, in all three models of uncertainty that we study. This is because, given a complete profile as a witness, we can find the winning committees of a fixed size $k$ under Thiele rules in polynomial time.

Our results concerning the complexity of deciding $\poscompar{k}$ under different models of uncertainty are summarized in \Cref{table:poscom}. 
In particular, we show that under the model of poset approval, $\poscompar{k}$ is NP-complete for all Thiele rules. To obtain this result, we study the complexity of $\poscompar{k}$ in the models of three-valued approval and linear incomplete approval in Sections~\ref{sec:poscom-3va} and \ref{sec:poscom-linear}, respectively. Then, we prove our main result in Section~\ref{sec:poscom-proof}. 

We first introduce two useful lemmata that hold in all three models of uncertainty. The first enables us to extend results from one \thiele rule to another. The second one shows how to extend results from one value of the committee size $k$ to all $k' \geq k$.

\newcolumntype{C}{>{\centering\arraybackslash}m{2.7cm}}
\newcolumntype{E}{>{\centering\arraybackslash}m{3.2cm}}
\def\nl{\par\noindent}

\renewcommand{\arraystretch}{1.25}
\begin{table*}[t]
\small
\centering
\begin{tabular}{CCCCE}
\toprule
\textbf{Model} & AV & CC & PAV & {$w$-Thiele} \\
\midrule
Three-Valued Approval & P  [Thm.~\ref{thm:simplePoscomkApproval}] & NP-c  [Thm.~\ref{thm:simplePoscom2CC}] \nl (for all $k \geq 2$) &  ?  &  NP-c for every binary~$w$  [Cor.~\ref{cor:simplePoscomkThiele1Weight}]  \\ \midrule
Linear Incomplete Approval & NP-c  [Cor.~\ref{cor:linearPoscomkThieleIncreasing}]${}^*$
\nl (for all $k\geq 2$)
& P [Thm.~\ref{thm:linearPoscomkThiele1Weight}] 
& NP-c [Cor.~\ref{cor:linearPoscomkThieleIncreasing}]  (for all $k\geq 2$)  &  
P for every binary $w$  [Thm.~\ref{thm:linearPoscomkThiele1Weight}]; NP-c for every strictly increasing $w$ [Cor.~\ref{cor:linearPoscomkThieleIncreasing}]
\\ \midrule
Poset Approval & NP-c [Thm.~\ref{thm:posetPoscomkThiele}] \nl (for all $k\geq 2$) & NP-c [Thm.~\ref{thm:posetPoscomkThiele}] \nl (for all $k\geq 2$) & NP-c [Thm.~\ref{thm:posetPoscomkThiele}] \nl (for all $k\geq 2$)  &  NP-c for every $w$ [Thm.~\ref{thm:posetPoscomkThiele}] \\
\bottomrule
\end{tabular}
\caption{\label{tab:possibleCommittee} Overview of our complexity results for $\poscompar{k}$. ``NP-c'' means that there exists a $k^*$ such that $\poscompar{k}$ is NP-complete for all $k \geq k^*$. In cases where $k^*$ already equals 2 we mention it explicitly. ``P'' means that $\poscompar{k}$ is solvable in polynomial time for all $k$.
The result marked with ${}^*$ also follows from \citet{BGL+13a}. 
}
\label{table:poscom}
\end{table*}

\begin{lemma}
\label{lemma:thieleReduction}
Let $w_1, w_2$ be two weight functions. Assume there are two integers $k, t \geq 0$ and a strictly increasing linear function $g$ such that $w_2(x+t) = g(w_1(x))$ for every $x \in \set{0, 1, \dots, k}$. Then, in each of the three models of uncertainty, there is a polynomial-time reduction from $\poscompar{k}$ under $w_1$-Thiele to $\poscompar{k+t}$ under $w_2$-Thiele.
\end{lemma}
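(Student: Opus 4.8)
The plan is to reduce $\poscompar{k}$ under $w_1$ to $\poscompar{k+t}$ under $w_2$ by appending $t$ fresh \emph{dummy} candidates $d_1,\dots,d_t$ that every voter approves in every completion. Concretely, given a partial profile $\P=(P(v_1),\dots,P(v_n))$ over $C$ and a target set $W$ of size $k$, I build $\P'$ over $C \cup \set{d_1,\dots,d_t}$ by adding all $t$ dummies to $\top(v)$ for each voter $v$ while leaving $\middle(v)$, $\bottom(v)$, and $\succ_v$ untouched; the new target is $W' = W \cup \set{d_1,\dots,d_t}$, of size $k+t$. Since the dummies enter only the top sets, the construction stays within whichever of the three models we started in, and the completions of $\P'$ are in bijection with those of $\P$ (each $A'(v)$ equals $A(v) \cup \set{d_1,\dots,d_t}$). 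I will show that $W$ is a possible committee for $\P$ under $w_1$ if and only if $W'$ is a possible committee for $\P'$ under $w_2$, which yields the reduction.

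Fix a completion, with corresponding profiles $\A$ of $\P$ (scored by $w_1$) and $\A'$ of $\P'$ (scored by $w_2$). The first step is a score identity for committees containing all $t$ dummies: writing such a committee as $S \cup \set{d_1,\dots,d_t}$ with $S \subseteq C$ and $\abs{S}=k$, each voter $v$ contributes $w_2(\abs{S \cap A(v)} + t)$, since $v$ approves exactly $A(v)$ together with all dummies and these are disjoint from $C$. Summing over voters and invoking the hypothesis $w_2(x+t)-w_2(t) = w_1(x)/q$ at $x = \abs{S\cap A(v)} \in \set{0,\dots,k}$ gives $s(\A', S \cup \set{d_1,\dots,d_t}) = \tfrac1q\, s(\A,S) + n\,w_2(t)$. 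The key observation — and the thing one is tempted to overengineer — is that I do \emph{not} need to realize the factor $q$ by duplicating voters: because $q > 0$, this is an increasing affine function of the $w_1$-score $s(\A,S)$, so over all size-$k$ sets $S$ the $w_2$-ranking of the committees $S \cup \set{d_1,\dots,d_t}$ coincides (ties included) with the $w_1$-ranking of the sets $S$.

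It remains to guarantee that some $w_2$-winning committee of $\P'$ contains all $t$ dummies, so that the comparison above actually decides the winner; this is a swap argument. Given any size-$(k+t)$ committee that misses a dummy $d$, it contains more than $k$ real candidates, so I can pick a real candidate $c$ in it and replace $c$ by $d$. For every voter $v$ the overlap changes by $[d \in A'(v)] - [c \in A'(v)] = 1 - [c\in A(v)] \ge 0$, and since $w_2$ is non-decreasing no voter's contribution drops; iterating shows the maximum $w_2$-score is attained by a committee containing all dummies. Combining the three ingredients, $W'$ is $w_2$-winning in $\A'$ iff $s(\A',W')$ equals the overall maximum $w_2$-score, which equals the maximum over all-dummy committees $=\tfrac1q \max_{\abs{S}=k} s(\A,S) + n\,w_2(t)$, and this holds iff $W$ is $w_1$-winning in $\A$. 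As completions correspond, $W'$ is possible for $\P'$ iff $W$ is possible for $\P$. The main things to get right are the swap argument (forcing the dummies into a winning committee) and the realization that positive scaling is order-preserving, so $q$ never has to be materialized in the construction.
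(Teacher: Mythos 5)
Your proof is correct and follows essentially the same route as the paper's: append $t$ dummy candidates to every voter's $\top(v)$, exploit the affine relation $s_{w_2}(\A', S\cup D) = \tfrac1q s_{w_1}(\A,S) + n\,w_2(t)$, and use monotonicity of $w_2$ to argue that committees containing all dummies dominate. The only cosmetic difference is that you establish this dominance by an iterative swap, whereas the paper does it in one step by comparing an arbitrary size-$(k+t)$ set $H$ against $F \cup D$ for a size-$k$ subset $F \subseteq H \setminus D$.
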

\begin{proof}
Assume an instance of $\poscompar{k}$ under $w_1$-Thiele specified by $C_1, \P_1 = (P_1(1), \dots, P_1(n))$ and $W \subseteq C_1$ of size $k$. Define an instance of $\poscompar{k+t}$ under $w_2$-Thiele where the candidates are $C_2 = C_1 \cup D$ for $D = \set{d_1, \dots, d_t}$. The profile is $\P_2 = (P_2(1), \dots, P_2(n))$ where we have the same set of voters $V$ and for every voter $v \in V$, $P_2(v)$ is the same as $P_1(v)$ except that $v$ also approves the candidates of $D$. Formally, we add the candidates of $D$ to $\top(v)$, and the rest remains unchanged.  
We now prove that $W$ is a possible committee (of size $k$) for $\P_1$ under $w_1$-Thiele if and only if $W \cup D$ is a possible committee (of size $k+t$) for $\P_2$ under $w_2$-Thiele.

Assume that $W \cup D$ is a possible committee (of size $(k+t)$) for $\P_2$, let $\A_2 = (A_2(1), \allowbreak\dots, A_2(n))$ be a completion of $\P_2$ where $W \cup D$ is a winning committee. Consider the completion $\A_1 = (A_1(1), \dots, A_1(n))$ of $\P_1$ where $A_1(v) = A_2(v) \setminus D$ for every voter $v \in V$. We show that $W$ is a winning committee in $\A_1$.
Let $F \subseteq C_1$ of size $k$. Since $W \cup D$ is a winning committee for $\A_2$, we have $\tsc_{w_2}(\A_2, W \cup D) \geq \tsc_{w_2}(\A_2, F \cup D)$. Every voter of $\A_2$ approves all candidates of $D$, hence for every voter $v \in V$ we have $|(F \cup D) \cap A_2(v)| = |F \cap A_1(v)| + t$, and $|F \cap A_1(v)| \in \set{0, \dots, k}$. Our assumption on $w_1$ and $w_2$ can be equivalently written as $w_1(x) = g^{-1}(w_2(x+t))$ for every $x \in \set{0, 1, \dots, k}$. Note that $g^{-1}$ is also a strictly increasing linear function. We can deduce the following.
\begin{align*}
    & \tsc_{w_1}(\A_1, W) - \tsc_{w_1}(\A_1, F) = \sum_{v \in V} w_1(|W \cap A_1(v)| - w_1(|F \cap A_1(v)|) \\
    &= \sum_{v \in V} g^{-1} \rpar{w_2(|W \cap A_1(v)|+t)} - g^{-1} \rpar{w_2(|F \cap A_1(v)|+t)} \\
    &= \sum_{v \in V} g^{-1} \rpar{w_2(|(W \cup D) \cap A_2(v)|)} - g^{-1} \rpar{w_2(|(F \cup D) \cap A_1(v)|)} \\
    &= g^{-1} \rpar{\tsc_{w_2}(\A_2, W \cup D)} - g^{-1} \rpar{\tsc_{w_2}(\A_2, F \cup D)} \geq 0
\end{align*}
Therefore, $W$ is a winning committee of $\A_1$, and $W$ is a possible committee (of size $k$) for $\P_1$.

Conversely, assume that $W$ is a possible committee (of size $k$) for $\P_1$, let $\A_1 = (A_1(1), \dots, A_1(n))$ be a completion of $\P_1$ where $W$ is a winning committee. Define a completion $\A_2 = (A_2(1), \dots, A_2(n))$ of $\P_2$ where $A_2(v) = A_1(v) \cup D$ for every voter $v \in V$. We show that $W \cup D$ is a winning committee of $\A_2$. Let $H \subseteq C_2$ be a subset of size $k+t$. Since $|D| = t$, there are at least $k$ candidates in $H$ which are not in $D$ (these are candidates of $C_1$), let $F = \set{f_1, \dots, f_k} \subseteq H \setminus D \subseteq C$. For every voter $v \in V$, we have $D \subseteq A_2(v)$, \ie $v$ approves all candidates of $D$ in $\A_2$, therefore $|(F \cup D) \cap A_2(v)| \geq |H \cap A_2(v)|$. Since $w_2$ is non-decreasing, we get $\tsc_{w_2}(A_2(v), F \cup D) \geq \tsc_{w_2}(A_2(v), H)$ for every voter $v \in V$, and overall $\tsc_{w_2}(\A_2, F \cup D) \geq \tsc_{w_2}(\A_2, H)$.

Next, recall that $W$ is a winning committee of $\A_1$ (under $w_1$-Thiele), and hence $\tsc_{w_1}(\A_1, W) \geq \tsc_{w_1}(\A_1, F)$. Again, for every voter $v \in V$ we have $|(F \cup D) \cap A_2(v)| = |F \cap A_1(i)| + t$. Therefore
\begin{align*}
    & \tsc_{w_2}(\A_2, W \cup D) - \tsc_{w_2}(\A_2, H) \geq \tsc_{w_2}(\A_2, W \cup D) - \tsc_{w_2}(\A_2, F \cup D) \\
    &= \sum_{v \in V} w_2(|(W \cup D) \cap A_2(v)|)  - w_2(|(F \cup D) \cap A_2(v)|) \\
    &= \sum_{v \in V} g(w_1(|(W \cup D) \cap A_2(v)| - t)) - g(w_1(|(F \cup D) \cap A_2(v)| - t)) \\
    &= \sum_{v \in V} g(w_1(|W \cap A_1(v)| )) - g(w_1(|F \cap A_1(v)|)) \\
    &= g(\tsc_{w_1}(\A_1, W)) - g(\tsc_{w_1}(\A_1, F)) \geq 0 \,.
\end{align*}
We can deduce that $W \cup D$ is a winning committee of $\A_2$, and $W \cup D$ is a possible committee (of size $(k+t)$) for $\P_2$.
\end{proof}

\begin{lemma}\label{lem:abcu_extendk}
    Consider any $w$-\thiele rule and let $k, t \geq 1$ be two integers. In each of the three models of uncertainty, there is a polynomial-time reduction from $\poscompar{k}$ under $w$-Thiele to $\poscompar{k+t}$ under the same rule.
\end{lemma}
\begin{proof}
Assume we are given an instance of $\poscompar{k}$ under $w$-Thiele. We define an instance of $\poscompar{k+t}$ under the same rule as follows. We simply take the original instance and add $t$ dummy candidates $D = \set{d_1, \dots, d_t}$ which are all not approved by any of the original voters. Then we add enough dummy voters, all approving without uncertainty exactly the set $D$, such that $D$ is part of every winning committee under $w$-\thiele for $k+t$. 
Now let $\A$ be a completion of the original instance of $\poscompar{k}$. There exists a corresponding completion $\A'$ of the instance of $\poscompar{k+t}$ where every original voter approves exactly the candidates she approves in $\A$ (and every dummy voter approves exactly the candidates in $D$).
Then a committee $W$ of size $k$ wins in $\A$ under $w$-\thiele if and only if $W \cup D$ wins in $\A'$ under the same rule.
\end{proof}

\subsection{Three-Valued Approval}\label{sec:poscom-3va}

We now discuss the complexity of $\poscompar{k}$ in the 3VA model for different Thiele rules. We start with \av and \cc.
For \av, it turns out that in order solve the possible committee problem it suffices to consider one specific completion of the partial profile. This yields an efficient algorithm. For \cc however, it turns out that the problem is hard. We show this via a reduction from a well-studied 3SAT variant.

\begin{theorem}
\label{thm:simplePoscomkApproval}
In the 3VA model, $\poscompar{k}$ is solvable in polynomial time under Approval Voting, for all fixed $k \geq 1$.
\end{theorem}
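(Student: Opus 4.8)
The plan is to exploit the fact that, under Approval Voting, the score of a committee decomposes candidate by candidate. Writing $\mathrm{sc}_\A(c) \eqdef |\set{v \in V : c \in A(v)}|$ for the approval score of a candidate $c$ in a completion $\A$, we have $s(\A, W) = \sum_{c \in W} \mathrm{sc}_\A(c)$. Hence $W$ is a winning committee in $\A$ exactly when it collects $k$ of the highest scores, which (accounting for ties) is equivalent to the condition $\min_{c \in W} \mathrm{sc}_\A(c) \geq \max_{c' \in C \setminus W} \mathrm{sc}_\A(c')$. So the task becomes: decide whether \emph{some} completion of $\P$ satisfies this inequality.

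The key observation is that in the 3VA model the approval decisions for distinct candidates in $\middle(v)$ are completely independent, so there is no tension between boosting the scores of candidates in $W$ and suppressing the scores of candidates outside $W$. I would therefore single out one canonical completion $\A^*$: for every voter $v$, let $v$ approve exactly $\top(v) \cup (\middle(v) \cap W)$, i.e., approve every middle candidate lying in $W$ and disapprove every middle candidate outside $W$. The algorithm simply builds $\A^*$, computes all scores $\mathrm{sc}_{\A^*}(c)$, and accepts iff $\min_{c \in W} \mathrm{sc}_{\A^*}(c) \geq \max_{c' \in C \setminus W} \mathrm{sc}_{\A^*}(c')$; this clearly runs in polynomial time.

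For correctness I would argue that $\A^*$ is simultaneously extremal for every candidate in the relevant direction. If $\A$ is any completion, then for each $c \in W$ the voters approving $c$ under $\A^*$ form a superset of those approving $c$ under $\A$ (we approve $c$ whenever allowed), so $\mathrm{sc}_{\A^*}(c) \geq \mathrm{sc}_\A(c)$; symmetrically, each $c' \in C \setminus W$ is approved under $\A^*$ only by the forced voters in $\top$, so $\mathrm{sc}_{\A^*}(c') \leq \mathrm{sc}_\A(c')$. Consequently, if $W$ is winning in some completion $\A$, the chain
\[
\min_{c \in W} \mathrm{sc}_{\A^*}(c) \;\geq\; \min_{c \in W} \mathrm{sc}_\A(c) \;\geq\; \max_{c' \in C \setminus W} \mathrm{sc}_\A(c') \;\geq\; \max_{c' \in C \setminus W} \mathrm{sc}_{\A^*}(c')
\]
shows $W$ is also winning in $\A^*$, so the algorithm accepts. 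The converse is immediate, as $\A^*$ is itself a completion witnessing that $W$ is possible.

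I do not expect a serious obstacle here: the only point requiring care is the treatment of ties in the definition of a winning committee, which is why I phrase the winning condition as the min/max inequality rather than as strict top-$k$ membership. The entire conceptual content lies in the independence of the 3VA choices, which lets a single greedy completion be extremal for every candidate at once---a property that will fail once the linear or poset constraints couple the approval decisions, explaining why the same problem becomes hard in those models.
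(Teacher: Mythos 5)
Your proof is correct, and it hinges on exactly the same canonical completion as the paper's: for each voter $v$, approve $\top(v) \cup (\middle(v) \cap W)$ and nothing else from $\middle(v)$. Where you differ is in how correctness is established. The paper argues at the level of committee scores: starting from an arbitrary completion $\A_1$ in which $W$ wins, it first adds $\middle(v) \cap W$ to every approval set (showing $s(A_2(v),W)$ rises by exactly the number of added candidates while every other committee's score rises by at most that much) and then strips the remaining middle candidates, concluding $W$ still wins in the canonical completion. You instead exploit the candidate-wise decomposability of AV, characterize winning as $\min_{c \in W}\mathrm{sc}(c) \geq \max_{c' \notin W}\mathrm{sc}(c')$, and observe that the canonical completion is simultaneously score-maximizing for every $c \in W$ and score-minimizing for every $c' \notin W$; the three-term chain of inequalities then does all the work in one step. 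Both arguments are valid (your min/max characterization of AV winners, including ties, is easily checked by a swap argument), and yours is arguably the more transparent one for AV specifically; the paper's committee-score transformation has the mild advantage of being phrased in the vocabulary that its other Thiele-rule proofs reuse, whereas your argument is tied to the additive, candidate-separable structure of AV. Your closing remark is also on point: the per-candidate extremality of a single completion is precisely what breaks under the linear and poset models, where approving a $W$-candidate can force approval of a non-$W$ candidate.
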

\begin{proof}
Given a partial profile $\P = (P(1), \dots, P(n))$ and a subset $W \subseteq C$ of size $k$, define a completion $\A = (A(1), \dots, A(n))$ where for every voter $v \in V$, $\middle_\A(v) = \middle(v) \cap W$. By this construction, except for $\top(v)$, voters only approve candidates in $W$. 
We now show that $W$ is a possible committee for $\P$ if and only if $W$ is a winning committee in $\A$. One direction is immediate, since if $W$ is a winning committee in $\A$ then by definition $W$ is a possible committee.

For the other direction, assume that $W$ is a possible committee, let $\A_1 = (A_1(1), \dots, A_1(n))$ be a completion where $W$ is a winning committee. Define another completion $\A_2 = (A_2(1), \dots, A_2(n))$ where for every voter $v \in V$, $\middle_{\A_2}(v) = \middle_{\A_1}(v) \cup (\middle(v) \cap W)$. Let $v$ be a voter, and let $t = |\middle_{\A_2}(v) \setminus \middle_{\A_1}(v)|$. Observe that $w(|W \cap A_2(v)|) = w(|W \cap A_1(v)|) + t$, and for every other size-$k$ subset $W' \subseteq C$ we have $w(|W' \cap A_2(v)|) \leq w(|W' \cap A_1(v)|) + t$. We can deduce that $W$ is also a winning committee of $\A_2$.
Now, for every voter $v \in V$, remove from $\middle_{\A_2}(v)$ all candidates which are not in $W$, we get the completion $\A$ that we defined earlier. Since the number of candidates in $W$ that each voter approves is unchanged, we have $\tsc(\A, W) = \tsc(\A_2, W)$, and for every other size-$k$ subset $W'$ we have $\tsc(\A, W') \leq \tsc(\A_2, W')$. Therefore, $W$ is a winning committee of $\A$.
\end{proof}

\begin{theorem}
\label{thm:simplePoscom2CC}
In the 3VA model, $\poscompar{k}$ is NP-complete under Chamberlin--Courant, for all fixed $k \geq 2$.
\end{theorem}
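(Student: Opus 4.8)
The plan is to show NP-completeness of $\poscompar{2}$ under Chamberlin--Courant in the 3VA model by reducing from a suitable NP-hard problem. Membership in NP was already argued in general, so I only need hardness. Under CC, a voter contributes score $1$ to a committee $W$ exactly when $W$ contains at least one candidate that the voter approves (since $w(x)=1$ for all $x\geq 1$), and $0$ otherwise. Hence the CC-score of $W$ is simply the number of voters who approve some member of $W$, i.e.\ the number of voters ``covered'' by $W$. For $|W|=2$, a competing committee could cover strictly more voters, so the task of deciding whether a given $W=\{a,b\}$ is a \emph{possible} winning committee becomes a combinatorial question about simultaneously choosing completions that (i)~maximize the coverage of $W$ and (ii)~suppress the coverage of every alternative size-$2$ committee. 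The tension between these two goals, mediated by the freedom to place each $\middle(v)$ candidate into or out of the approval set, is what I expect to drive the hardness.

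First I would fix the target committee, say $W=\{a,b\}$, where $a,b$ are candidates that I will force to be approved (or approvable) by as many voters as possible. The natural source problem is a covering/independence-type problem such as \textsc{Independent Set} or \textsc{Vertex Cover} (or possibly \textsc{Balanced Complete Bipartite Subgraph}/\textsc{Clique}), because the requirement that no alternative pair $\{c,c'\}$ beats $W$ translates into a constraint that the adversarial completions cannot make any single other pair cover too many voters. Concretely, I would encode graph vertices as candidates in $C\setminus W$ and build voter gadgets whose $\middle$ sets link pairs of vertex-candidates, so that the coverage achievable by a competing pair $\{c,c'\}$ is controlled by whether $c,c'$ are adjacent. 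Because in 3VA a voter may approve \emph{any} subset of its $\middle$ set independently, the completion is chosen globally to maximize the winning committee's score, so I must arrange the gadget so that the best global completion forces a structure (e.g.\ an independent set) in the graph precisely when $W$ can be made a winner.

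The key steps, in order, would be: (1)~choose the source problem and describe the candidate set $C=\{a,b\}\cup\{\text{vertex/edge gadgets}\}$; (2)~define the partial votes, designating $\top(v)$, $\bottom(v)$, and $\middle(v)$ for each voter so that $a$ and $b$ together can cover a large, fixed number of voters while each rival candidate's individual coverage is bounded; (3)~compute, for an arbitrary completion $\A$, the CC-score $s_{\mathrm{CC}}(\A,W)$ as the number of covered voters, and likewise bound $s_{\mathrm{CC}}(\A,\{c,c'\})$ for every other pair; (4)~prove the forward direction---if the graph has the desired structure (e.g.\ an independent set of the target size), then there is a completion in which $W$ covers at least as many voters as any rival pair, so $W$ wins; and (5)~prove the converse---if $W$ is a possible winning committee, extract from the winning completion the required combinatorial object in the graph. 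Throughout I would exploit that making an unknown candidate approved can only increase coverage of committees containing it, so the adversary's incentives in (3)--(5) are monotone and predictable.

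The hard part will be designing the gadget so that the two competing pressures balance exactly at the threshold. On one hand, to make $W$ win we want to approve as many $\middle$ candidates as possible so that $a$ or $b$ covers every gadget voter; on the other hand, every such approval may simultaneously hand coverage to some rival pair $\{c,c'\}$, so the completion that is best for $W$ must avoid creating a rival that covers more voters than $W$ does. Calibrating the number of ``dummy'' voters that are guaranteed to approve $a$ or $b$ against the number of voters a rival pair could steal---so that $W$ ties or beats all rivals if and only if the graph admits the target object---is the delicate counting step, and getting the inequalities tight (especially handling ties, since CC may output multiple winning committees and $W$ need only be \emph{one} of them) is where I expect the main difficulty to lie.
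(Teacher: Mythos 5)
Your submission is a plan rather than a proof: the gadget, the counting, and both directions of the equivalence are all deferred (``I would encode\dots'', ``the delicate counting step\dots is where I expect the main difficulty to lie''), so there is nothing yet to verify. More importantly, the mechanism you propose to generate hardness does not work for Chamberlin--Courant in the 3VA model. You plan to let vertex-candidates outside $W=\{a,b\}$ form the rival pairs, with adjacency in the source graph controlling how many voters a rival pair $\{c,c'\}$ can cover. But under CC, deleting any candidate $c\notin W$ from a voter's approval set never changes $s(A(v),W)$ and never increases $s(A(v),W')$ for any $W'$; hence in a completion witnessing that $W$ is a possible winner one may assume $\middle_\A(v)\subseteq W$ for every voter. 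Consequently the coverage of every rival pair \emph{disjoint} from $W$ is completely determined by the $\top(v)$ sets and is the same constant in all relevant completions --- it cannot encode an NP-hard constraint. The tension you describe (``every such approval may hand coverage to some rival pair $\{c,c'\}$'') simply does not arise for disjoint rivals.

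The hardness has to come from rival committees that \emph{share} a member with $W$, and this is exactly what the paper's construction exploits. There, $W=\{w_1,w_2\}$, each element $x$ of a one-in-three positive 3SAT instance gets a voter $v_x$ with $\middle(v_x)=\{w_1,w_2\}$ and $\top(v_x)$ equal to the set-candidates not containing $x$; to contribute to $W$ the voter must approve at least one of $w_1,w_2$, and whichever one it approves also feeds the overlapping rivals $\{w_1,S_i\}$ and $\{w_2,S_i\}$ for precisely those $S_i$ containing $x$. The two families of overlapping rivals then enforce $|B\cap S_i|\le 1$ and $|S_i\setminus B|\le 2$ simultaneously, which together with a cardinality constraint from $\{w_1,w_2\}$'s own score yields the exact one-in-three condition. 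If you want to salvage your approach, you would need to redesign the gadget so that the checking is done by committees of the form $\{a,c\}$ or $\{b,c\}$ and so that each voter faces a genuine binary choice inside $\middle(v)\cap W$; as written, the proposal is missing this key idea and the construction that realizes it.
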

\begin{proof}
Recall that $\poscompar{k}$ is in NP under all Thiele rules.
We show the hardness for $\poscompar{2}$ here and can extend this to any $k\geq 2$ using \Cref{lem:abcu_extendk}.
We show NP-hardness by a reduction from \emph{one-in-three positive 3SAT}, 
which is the following decision problem: Let $X = \set{x_1, \dots, x_n}$ be a set of elements. Given sets $S_1, \dots, S_m \subseteq X$, where $|S_j| = 3$ for every $j \in [m]$, is there a subset $B \subseteq X$ such that for each $j \in [m]$, $|B \cap S_j| = 1$? This problem is NP-complete~\citep[Problem~LO4]{GaJo79a}. 

Given $X$ and $\S = \set{S_1, \dots, S_m}$, we define an instance of $\poscompar{2}$ under Cham\-berlin--Courant. The candidates are $C = \S \cup \set{w_1, w_2}$. The partial profile $\P = \P_1 \circ \A_2$ consists of two parts that we describe next. 
The first part, $\P_1$, consists of a voter for every element in $X$. Let $x \in X$, and let $S(x) = \set{S_i \in \S : x \in S_i}$ be the sets of $\S$ that contain~$x$. Define a voter $v_x$ with $\top(v_x) = \S \setminus S(x)$, $\middle(v_x) = \set{w_1, w_2}$ and $\bottom(v_x) = S(x)$. The decision whether to approve $w_1$ or $w_2$ represents the decision whether to include~$x$ in the subset $B \subseteq X$. The second part, $\A_2$, consists of six voters without uncertainty (voters with $\middle(i) = \emptyset$). We add three voters approving $\S$, two voters approving $\set{w_1}$, and a single voter approving $\set{w_2}$.

We now show that $\set{w_1, w_2}$ is a possible committee if and only if there is a solution to the instance of one-in-three positive 3SAT.
For this we start with some observations regarding the voting profile. Let $\A = \A_1 \circ \A_2$ be a completion of $\P$. Assume that for every voter $v_x$ of $\A_1$, we have either $\middle_\A(v_x) = \set{w_1}$ or $\middle_\A(v_x) = \set{w_2}$. Observe that for every voter $v_x$ we have $A_1(v_x) \cap \set{w_1, w_2} \neq \emptyset$ hence 
\begin{align*}
    \tsc(\A, \set{w_1, w_2}) &= \tsc(\A_1, \set{w_1, w_2}) + \tsc(\A_2, \set{w_1, w_2}) = n + 3.
\end{align*}
Let $B = \set{x \in X : \middle_\A(v_x) = \set{w_1}}$ be the elements for which the corresponding voter approves $w_1$. Let $S_i \in \S$. For every $x \in S_i$, we have $\tsc(A_1(v_x), \set{w_1, S_i}) = 1$ if $x \in B$, and $\tsc(A_1(v_x), \set{w_1, S_i}) = 0$ otherwise. We have $\tsc(A_1(v_x), \set{w_1, S_i}) = 1$ for every $x \in X \setminus S_i$, since $v_x$ always approves $S_i$. Since $|S_i| = 3$, we can deduce that
\begin{align*}
    & \tsc(\A, \set{w_1, S_i}) = \tsc(\A_1, \set{w_1, S_i}) + \tsc(\A_2, \set{w_1, S_i}) = (|B \cap S_i| + n-3) + 5 \\
    &= \tsc(\A, \set{w_1, w_2}) - 1 + |B \cap S_i|.
\end{align*}
Similarly, the score of $\set{w_2, S_i}$ is
\begin{align*}
    \tsc(\A, \set{w_2, S_i}) &= (|S_j \setminus B| + n-3) + 4 = \tsc(\A, \set{w_1, w_2}) - 2 + |S_i \setminus B|.
\end{align*}
Finally, let $S_i, S_j \in \S$. There are at most $n$ voters in $\A_1$ who approve one of $S_i, S_j$, and three voters of $\A_2$ who approve $S_i$ and $S_j$. Therefore, $\tsc(\A, \set{S_i, S_j}) \leq n+3 = \tsc(\A, \set{w_1, w_2})$.

To show that $\set{w_1, w_2}$ is a possible committee if and only if there is a solution to the instance of one-in-three positive 3SAT, we do the following. Assume that there exists $B \subseteq X$ which satisfies $|B \cap S_i| = 1$ for each $i \in [m]$. Since each $S_i$ consists of three elements, we also have $|S_i \setminus B| = 2$. Define a completion $\A = \A_1 \circ \A_2$ of $\P$ where for every $x \in X$, if $x \in B$ then $\middle_\A(i_x)=  \set{w_1}$, otherwise $\middle_\A(i_x)=  \set{w_2}$. By our analysis of the scores in different completions, for every $S_i \in \S$ we have $\tsc(\A, \set{w_1, S_i}) = \tsc(\A, \set{w_2, S_i})=  \tsc(\A, \set{w_1, w_2})$, and $\tsc(\A, \set{S_i, S_j}) \leq \tsc(\A, \set{w_1, w_2})$ for every pair $S_i, S_j$. Hence, $\set{w_1, w_2}$ is a winning committee of $\A$.

Conversely, assume that there is a completion $\A = \A_1 \circ \A_2$ of $\P$ where $\set{w_1, w_2}$ is a winning committee. Let $v_x$ be a voter in $\A_1$. If $\middle_\A(v_x) = \set{w_1, w_2}$, then we can arbitrarily remove one of these candidates from $\middle_\A(v_x)$. The score of $\set{w_1, w_2}$ remains the same, and the score of any other subset of candidates cannot increase. If $\middle_\A(v_x) = \emptyset$, then we can arbitrarily add one of $w_1, w_2$ to $\middle_\A(v_x)$. The score of $\set{w_1, w_2}$ increases by 1, and the score of any other size-$k$ subset increases by at most 1. Therefore, there exists a completion $\A' = \A_1' \circ \A_2$ where $\set{w_1, w_2}$ is a winning committee, and for every voter $v_x$ of $\A_1'$, we have either $\middle_{\A'}(v_x) = \set{w_1}$ or $\middle_{\A'}(v_x) = \set{w_2}$.

As before, define $B = \set{x \in X : \middle_{\A'}(v_x) = \set{w_1}}$. For every $S_i \in \S$, we must have $\tsc(\A', \set{w_1, S_i}) \leq \tsc(\A', \set{w_1, w_2})$ hence $|B \cap S_i| \leq 1$. On the other hand, we must have $\tsc(\A', \set{w_2, S_i}) \leq \tsc(\A', \set{w_1, w_2})$ hence $|S_i \setminus B| \leq 2$. We can deduce that $|B \cap S_i| = 1$ for every $i \in [m]$.
\end{proof}

Next, we use Lemma~\ref{lemma:thieleReduction} and Theorem~\ref{thm:simplePoscom2CC} to obtain another hardness result.

\begin{theorem}
\label{thm:simplePoscomkThieleConsecutiveEqual}
Let $w$ be a weight function such that $w(k-2) < w(k-1) = w(k)$ for some $k \geq 2$. Then, in the 3VA model, $\poscompar{k}$ is NP-complete under $w$-Thiele.
\end{theorem}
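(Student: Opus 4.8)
The plan is to prove NP-completeness in two parts. Membership in NP was already observed earlier for every fixed committee size: one can guess a completion and check in polynomial time whether $W$ is a winning committee under the (fixed-size) Thiele rule. So the real work is the hardness, and the natural route—signposted by the fact that we now have both Lemma~\ref{lemma:thieleReduction} and Theorem~\ref{thm:simplePoscom2CC}—is to take the NP-hardness of $\poscompar{2}$ under Chamberlin--Courant and push it through the score-transfer machinery of Lemma~\ref{lemma:thieleReduction} to land at $\poscompar{k}$ under $w$-Thiele.

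Concretely, I would set the offset $t \coloneqq k-2$ (which is $\geq 0$ since $k\geq 2$) and try to realize a copy of CC, up to scaling, as the function $w_1$ of Lemma~\ref{lemma:thieleReduction}, with $w_2 \coloneqq w$ and source committee size $2$. The lemma then demands $w_1(x) = q\,(w(x+t)-w(t))$ for all $x\in\{0,1,2\}$. For $x=0$ this is vacuous. For $x\in\{1,2\}$ we have $x+t\in\{k-1,k\}$, so the right-hand side is $q\,(w(k-1)-w(k-2))$ when $x=1$ and $q\,(w(k)-w(k-2))$ when $x=2$, and the hypothesis $w(k-1)=w(k)$ makes these two equal. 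This is exactly where the theorem's two assumptions enter: the equality $w(k-1)=w(k)$ lets a binary, CC-like source function (which is flat from $1$ onwards) be matched by $w$ after the shift by $t$, while the strict inequality $w(k-2)<w(k-1)$ guarantees that the common value $w(k-1)-w(k-2)$ is strictly positive.

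The only technical wrinkle—and the step I expect to be the main (if minor) obstacle—is that Lemma~\ref{lemma:thieleReduction} requires $q$ to be a positive \emph{integer}, whereas solving $1=q\,(w(k-1)-w(k-2))$ for the unscaled CC weight function would force a generally non-integral $q$. I would sidestep this by scaling the source rule rather than the multiplier: take $w_1 \coloneqq c\cdot w_{\mathrm{CC}}$ with $c\coloneqq w(k-1)-w(k-2)>0$, where $w_{\mathrm{CC}}(x)=1$ for $x\geq 1$ and $w_{\mathrm{CC}}(0)=0$, and then $q=1$ works. Since multiplying a Thiele weight function by a positive constant scales every committee's score in every completion by the same factor, it leaves the set of winning committees unchanged in every completion; hence $\poscompar{2}$ under $w_1$-Thiele is literally the same decision problem as $\poscompar{2}$ under CC and is NP-hard by Theorem~\ref{thm:simplePoscom2CC}. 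Invoking Lemma~\ref{lemma:thieleReduction} with this $w_1$, with $w_2=w$, source size $2$, $q=1$, and $t=k-2$ then produces a polynomial-time reduction from $\poscompar{2}$ under $w_1$-Thiele to $\poscompar{k}$ under $w$-Thiele, establishing NP-hardness and thus, together with membership in NP, NP-completeness.
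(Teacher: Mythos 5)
Your proof is correct and takes essentially the same route as the paper's: a reduction from $\poscompar{2}$ under Chamberlin--Courant to $\poscompar{k}$ under $w$-Thiele via Lemma~\ref{lemma:thieleReduction} with offset $t=k-2$, using $w(k-1)=w(k)$ to match the flat part of the CC weight function and $w(k-2)<w(k-1)$ for positivity. The only difference is cosmetic: the paper sets $q = 1/(w(k)-w(k-2))$ directly (which need not be an integer, although the lemma's proof never actually uses integrality of $q$), whereas you keep $q=1$ by rescaling the source weight function and observing that positive rescaling preserves the winning committees --- both are fine, and your variant arguably adheres more literally to the lemma's stated hypotheses.
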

\begin{proof}
Let $w_{CC}$ be the weight function for Chamberlin--Courant, namely, $w_{CC}(0) = 0$ and  $w_{CC}(x) = 1$ for every $x \geq 1$. Let $g$ be the linear function that satisfies $g(0) = w(k-2)$ and $g(1) = w(k-1)$, note that it is strictly increasing. Observe that for $t=k-2$,
\begin{align*}
    w(0+t) &= w(k-2) = g(0) = g(w_{CC}(0)) \\
    w(1+t) &= w(k-1) = g(1) = g(w_{CC}(1)) \\
    w(2+t) &= w(k) = w(k-1) = g(1) = g(w_{CC}(2))
\end{align*}

By Lemma~\ref{lemma:thieleReduction}, there exists a reduction from $\poscompar{2}$ under Chamberlin--Courant ($w_{CC}$-Thiele) to $\poscompar{2+t} = \poscompar{k}$ under $w$-Thiele. By Theorem~\ref{thm:simplePoscom2CC}, $\poscompar{2}$ is NP-complete under Chamberlin--Courant, therefore $\poscompar{k}$ is NP-complete under $w$-Thiele.
\end{proof}

Whereas the condition in Theorem~\ref{thm:simplePoscomkThieleConsecutiveEqual} does not hold for PAV (for which the complexity of $\poscompar{k}$ in the 3VA model remains open), it clearly holds for binary Thiele rules. We can again use \Cref{lem:abcu_extendk} to extend the result to larger (but fixed) values of $k$.

\begin{corollary}
\label{cor:simplePoscomkThiele1Weight}
For every binary Thiele rule, there exists \mbox{$k^* \geq 2$} such that $\poscompar{k}$ is NP-complete in the 3VA model, for all fixed $k \geq k^*$.
\end{corollary}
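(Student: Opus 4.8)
The plan is to derive \Cref{cor:simplePoscomkThiele1Weight} as an immediate consequence of \Cref{thm:simplePoscomkThieleConsecutiveEqual}. Recall that a binary Thiele rule is characterized by a threshold $t \geq 1$ such that $w(x) = 0$ for all $x < t$ and $w(x) = 1$ for all $x \geq t$ (with the normalization $w(t)=1$). The key observation is that such a $w$ satisfies the hypothesis of \Cref{thm:simplePoscomkThieleConsecutiveEqual} for a suitable choice of $k$.

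First I would exhibit the witness $k$. The condition we need is $w(k-2) < w(k-1) = w(k)$ for some $k \geq 2$. For a binary rule with threshold $t$, I would set $k = t+1$. Then $k \geq 2$ since $t \geq 1$, and we have $w(k-2) = w(t-1) = 0$ while $w(k-1) = w(t) = 1 = w(t+1) = w(k)$. Thus $w(k-2) = 0 < 1 = w(k-1) = w(k)$, so the hypothesis of \Cref{thm:simplePoscomkThieleConsecutiveEqual} is met with $k = t+1 \geq 2$.

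Second, applying \Cref{thm:simplePoscomkThieleConsecutiveEqual} directly yields that $\poscompar{k}$ is NP-complete under $w$-Thiele in the 3VA model for this particular $k = t+1$, which is exactly the assertion of the corollary. I would note that membership in NP was already established earlier (for every fixed $k$, winning committees of a complete profile can be computed in polynomial time under any Thiele rule), so only the hardness content is new and it is fully supplied by the theorem.

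I do not anticipate any genuine obstacle here, as the corollary is a routine specialization. The only point requiring a sentence of care is confirming that the threshold $t$ is at least $1$ (so that $k = t+1 \geq 2$), which follows from the standing assumption that there exists an integer $x$ with $w(x) \neq 0$ together with the normalization $w(x)=1$ at the smallest such $x$; for a binary rule this smallest $x$ is precisely the threshold $t \geq 1$. Hence the proof reduces to stating the choice of $k$ and invoking the preceding theorem.
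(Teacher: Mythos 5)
Your proposal is correct and matches the paper's (implicit) argument exactly: the paper derives the corollary by observing that the hypothesis of Theorem~\ref{thm:simplePoscomkThieleConsecutiveEqual} holds for every binary Thiele rule, and your choice $k=t+1$ (with $w(t-1)=0<1=w(t)=w(t+1)$, noting $t\geq 1$ since $w(0)=0$) is precisely the witness needed.
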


\subsection{Linear Incomplete Approval}\label{sec:poscom-linear}

Next, we discuss the complexity of $\poscompar{k}$ in the linear model. We start with binary Thiele rules. 

\begin{theorem}
\label{thm:linearPoscomkThiele1Weight}
In the linear model, $\poscompar{k}$ is solvable in polynomial time for binary Thiele rules, for all fixed \mbox{$k \geq 1$}.
\end{theorem}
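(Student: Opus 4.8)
The plan is to reduce the problem to evaluating a single, canonically chosen completion. Fix the binary rule's threshold $t$, so that $w(x)=0$ for $x<t$ and $w(x)=1$ for $x\ge t$; then for any committee $S$ of size $k$, the score $s(\A,S)$ is simply the number of voters $v$ with $|S\cap A(v)|\ge t$. Since in the linear model $\middle_\A(v)$ is a prefix of the voter's linear order, for each voter $v$ I would identify the smallest prefix length $p_v$ for which $|W\cap A(v)|\ge t$ becomes true, setting $p_v=0$ if $|W\cap\top(v)|\ge t$ already and $p_v=\infty$ if this threshold for $W$ is unreachable even when all of $\middle(v)$ is approved. Each $p_v$ is found in polynomial time by scanning the linear order and counting the members of $W$ encountered.

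Next I would define the canonical completion $\A^*$ in which every voter $v$ approves exactly the length-$p_v$ prefix when $p_v\le|\middle(v)|$, and only $\top(v)$ (the empty prefix) otherwise, and claim that $W$ is a possible committee if and only if $W$ is a winning committee in $\A^*$. The ``if'' direction is immediate. For ``only if'', the key is a monotonicity/margin argument: for a fixed voter $v$, a fixed competitor $F$ of size $k$, and a prefix length $\ell$, write $c_v(\ell,S)\in\set{0,1}$ for the indicator that $v$ contributes to $S$; this is non-decreasing in $\ell$. I would show that the per-voter margin term $c_v(\ell,W)-c_v(\ell,F)$ is maximized at $\ell=p_v$ (resp.\ $\ell=0$ when $p_v=\infty$) for every competitor $F$ simultaneously: lengthening the prefix beyond $p_v$ can only raise $c_v(\cdot,F)$ while leaving $c_v(\cdot,W)=1$, and shortening it below $p_v$ drops $c_v(\cdot,W)$ to $0$ while saving at most one unit of $c_v(\cdot,F)$, so in either direction the margin does not improve. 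Summing over voters, $\A^*$ simultaneously maximizes $s(\A,W)-s(\A,F)$ over all completions, for every $F$ of size $k$; hence if some completion makes $W$ a winner, so does $\A^*$.

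With the claim in hand, the algorithm is to construct $\A^*$ and verify that $W$ maximizes $s(\A^*,\cdot)$ among all $\binom{m}{k}$ size-$k$ committees, which is polynomial for fixed $k$. The degenerate case $t>k$ (no size-$k$ committee can reach the threshold, so every committee trivially wins) is absorbed by the same argument, since then $p_v=\infty$ for all $v$ and all scores in $\A^*$ are zero. The main obstacle, and the heart of the proof, is establishing that the single completion $\A^*$ is optimal against all competitors at once: the reduction relies crucially on the binary structure, so that each per-voter contribution changes by at most one, and on the prefix structure of the linear model, so that ``minimally satisfying $W$'' is well defined. I would take particular care to verify the margin inequality in the boundary cases $p_v=0$ and $p_v=\infty$, which is where simultaneous optimality could otherwise fail.
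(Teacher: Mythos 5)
Your proposal is correct and matches the paper's proof essentially exactly: both construct the same canonical completion in which each voter approves the minimal prefix needed to push $|A(v)\cap W|$ to the threshold $t$ (or the empty prefix if the threshold is already met or unreachable), and both justify it by showing the per-voter score margin of $W$ against every competitor is simultaneously maximized there. Your explicit check of the boundary cases $p_v=0$ and $p_v=\infty$ corresponds to the paper's three-case analysis.
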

\begin{proof}
Consider a binary Thiele rule and let $t$ be such that $w(x) = 0$ for all $x < t$ and $w(x) = 1$ for all $x \geq t$. 
Without loss of generality, we can assume that $k \geq t$ because otherwise the score of every committee of size $k$ is always zero.
Given $\P = (P(1), \dots, P(n))$ and a subset $W \subseteq C$ of $k$ candidates
we construct a completion $\A$ as follows.
For every $v \in V$, let $c_{i_1}  \succeq_v  \dots  \succeq_v  c_{i_q}$ be the linear order of voter $v$ over the candidates in $\middle(v)$.
If $|\top(v) \cap W| \geq t$, then voter $v$ always contributes a score of~1 to~$W$. Since approving more candidates cannot increase the score of $W$, we select $\middle_\A(v) = \emptyset$. If $|(\top(v) \cup \middle(v)) \cap W| < t$, then voter $v$ always contributes a score of~0 to $W$, and we select $\middle_\A(v) = \emptyset$ again. Finally, assume that $|\top(v) \cap W| < t$ and $|(\top(i) \cup \middle(v)) \cap W| \geq t$. Let $j$ be minimal with $\left|(\top(v) \cup \set{c_{i_1}, \dots, c_{i_j}}) \cap W\right| = t$ and set $\middle_\A(v) = \set{c_{i_1}, \dots, c_{i_j}}$.

We now show that $W$ is a possible committee in $\P$ if and only if $W$ is a winning committee in $\A$. One direction is immediate, since if $W$ is a winning committee in $\A$, by definition it is a possible committee. For the other direction, assume that $W$ is a possible committee, let $\A'$ be a completion where $W$ wins. Let $v \in V$ be a voter, consider three cases:
\begin{enumerate}
    \item If $|\top(v) \cap W| \geq t$ then $\tsc(A(v), W) = \tsc(A'(v), W) = 1$. For every other size-$k$ subset $W'\subseteq C$ we have $\tsc(A(v), W') \leq \tsc(A'(v), W')$. Therefore, if we change $A'(v)$ to $A(v)$, $W$ remains a winning committee.
    \item If $|\top(v) \cap W| < t$ and $|(\top(v) \cup \middle(v)) \cap W| < t$, then $\tsc(A(v), W) = \tsc(A'(v), W) = 0$, and $\tsc(A(v), W') \leq \tsc(A'(v), W')$ for every other size-$k$ subset $W'\subseteq C$. Therefore, if we change $A'(v)$ to $A(v)$, $W$ remains a winning committee.
    \item If $|\top(v) \cap W| < t$ and $|(\top(v) \cup \middle(v)) \cap W| \geq t$ then we have two options. If $|A'(v) \cap W| < t$ then $\tsc(A(v), W) = \tsc(A'(v), W) + 1$, and for every other size-$k$ subset $W' \subseteq C$ we have $\tsc(A(v), W') \leq \tsc(A'(v), W') + 1$. Otherwise, $|A'(v) \cap W| \geq t$, we get that $\tsc(A(v), W) = \tsc(A'(v), W)$, and for every other $W'$ we have $\tsc(A(v), W') \leq \tsc(A'(v), W')$. In both cases, if we change $A'(v)$ to $A(v)$, $W$ remains a winning committee.
\end{enumerate}
Overall, we get that $W$ is a winning committee of $\A$.
\end{proof}

We now turn to the complexity of $\poscompar{k}$ under non-binary $w$-Thiele rules. The next theorem states that $\poscompar{2}$ is NP-complete under $w$-Thiele for every function $w$ whose first three values are pairwise distinct. 

\begin{theorem}
\label{thm:linearPoscom2Thiele}
Let $w$ be a weight function with $w(2) > w(1) > 0$. Then, in the linear model, $\poscompar{k}$ is NP-complete under $w$-Thiele, for all fixed $k \geq 2$.
\end{theorem}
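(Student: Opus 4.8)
The plan is to prove NP-hardness by a polynomial reduction from \emph{one-in-three positive 3SAT}, exactly the problem used in Theorem~\ref{thm:simplePoscom2CC}; membership of $\poscompar{2}$ in NP was already observed for all Thiele rules. Given elements $X = \set{x_1, \dots, x_n}$ and sets $\S = \set{S_1, \dots, S_m}$ with $|S_i| = 3$, I would build a partial profile in the linear model over candidates $C = \set{w_1, w_2} \cup \S$ (possibly augmented by a few auxiliary candidates, see below) and show that $W = \set{w_1, w_2}$ is a possible committee under $w$-Thiele if and only if there is $B \subseteq X$ with $|B \cap S_i| = 1$ for every $i$.

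The skeleton of the construction mirrors the one for Chamberlin--Courant: each clause contributes a candidate $S_i$, each element $x$ contributes a variable gadget whose completion encodes whether $x \in B$, and a block of uncertainty-free balancing voters (with $\middle(v) = \emptyset$) calibrates the scores so that $\set{w_1, w_2}$ ties the clause-competitors $\set{w_1, S_i}$ and $\set{w_2, S_i}$ precisely when $|B \cap S_i| = 1$, while strictly beating every $\set{S_i, S_j}$. The essential new ingredient is the variable gadget: in the linear model a single voter cannot approve $w_2$ without also approving $w_1$ whenever $w_1 \succ_v w_2$, so the symmetric exclusive choice ``approve $w_1$ xor approve $w_2$'' used in the 3VA proof is unavailable from one voter. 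I would therefore realize each variable by a \emph{pair} of voters with opposite linear orders, $w_1 \succ w_2$ and $w_2 \succ w_1$, and use their two canonical completions (one voter active, the other empty on the committee pair) to encode $x \in B$ versus $x \notin B$. Chosen correctly, both canonical completions contribute the \emph{same} amount to $s(\A, \set{w_1, w_2})$, so the committee's score is constant across completions (as $n \cdot w(1)$ was in the 3VA proof), while routing the $w_1$- and $w_2$-approvals complementarily as $|B \cap S_i|$ and $|S_i \setminus B|$.

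To verify correctness I would compute $s(\A, \set{w_1, w_2})$ and the scores of the three competitor types as functions of the local counts $|B \cap S_i|$, then tune the multiplicities of the balancing voters (scaling the whole instance by the common denominator of $w(1)$ and $w(2)$ to stay integral) so that the inequalities $s(\A, \set{w_1, w_2}) \ge s(\A, \set{w_1, S_i})$ and $s(\A, \set{w_1, w_2}) \ge s(\A, \set{w_2, S_i})$ become, respectively, $|B \cap S_i| \le 1$ and $|B \cap S_i| \ge 1$. For the ``only if'' direction I would prove a canonical-completion lemma, in the style of the WLOG arguments in Theorems~\ref{thm:simplePoscom2CC} and~\ref{thm:linearPoscomkThiele1Weight}: any completion in which $\set{w_1, w_2}$ wins can be pushed, without decreasing its margin over any competitor, to one in which each gadget is in a truth configuration; here the strict inequality $w(2) > w(1)$ is exactly what makes such local pushes score-monotone.

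The main obstacle, I expect, is controlling the \emph{synergy} created by $w(2) > w(1)$. In Chamberlin--Courant the cap $w(1) = w(2) = 1$ means a voter who approves both $w_1$ and $S_i$ still contributes only $1$, so the global popularity of a committee member does not leak into the clause-competitor scores and the comparison localizes to $|B \cap S_i|$. With strict weights this capping disappears: a voter approving $w_1$ contributes $w(1)$ to \emph{every} $\set{w_1, S_i}$, and a voter approving both $w_1$ and $S_i$ contributes the strictly larger $w(2)$, so a naive adaptation makes the committee-versus-$\set{w_1, S_i}$ comparison pick up a spurious dependence on the \emph{global} number of true variables rather than on $|B \cap S_i|$ alone. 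Making the per-clause constraint exactly one-in-three therefore requires designing the gadget so that these synergy terms cancel in the committee-minus-competitor differences --- most plausibly by introducing a private auxiliary (dummy) candidate per variable, or per clause, that absorbs the extra $w(2) - w(1)$ in a controlled, clause-local way --- and then proving the canonical-completion lemma for the augmented gadget. This localization of synergy, rather than the bookkeeping of the balancing voters, is where I anticipate the real work.
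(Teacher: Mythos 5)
Your proposal correctly isolates the central obstacle --- with $w(2) > w(1)$ the per-voter contribution no longer caps, so a voter who approves $w_1$ together with clause candidates in its $\top$-set inflates $s(\A,\set{w_1,S_i})$ by $w(2)-w(1)$ for every $x\in B$ with $x\notin S_i$, which drags the global $|B|$ into a comparison that must depend only on $|B\cap S_i|$ --- but the two devices you offer to overcome it do not work, and the construction is never actually completed. First, a ``private auxiliary candidate that absorbs the extra $w(2)-w(1)$'' cannot exist under a Thiele rule: the score $s(A(v),S)$ depends only on $|A(v)\cap S|$, so a dummy candidate that is not itself a member of the size-$2$ set being scored has no effect whatsoever on that set's score, and hence cannot cancel anything in the difference $s(\A,\set{w_1,w_2})-s(\A,\set{w_1,S_i})$. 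Second, your canonical-completion lemma breaks at exactly the step you lean on: in the Chamberlin--Courant proof, a voter with $\middle_\A(v_x)=\set{w_1,w_2}$ may drop one of the two without changing the committee's score only because $w(1)=w(2)$; under $w(2)>w(1)$ this push strictly \emph{decreases} $s(\A,\set{w_1,w_2})$ while leaving $s(\A,\set{w_1,S_i})$ unchanged, so it is monotone in the wrong direction --- the strict inequality is the enemy of that normalization, not (as you write) what makes it go through. With your paired-voter gadget, the completion in which one voter of the pair approves both of $w_1,w_2$ is strictly better for the committee than either designated ``truth configuration,'' so you cannot force an arbitrary winning completion into one of the two canonical states, and the correspondence with a one-in-three solution is lost.

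The paper takes a different route that sidesteps both problems: it reduces from exact cover by $3$-sets and makes the linear order itself the gadget. Each set $e\in E$ gets one voter $v_e$ with $\middle(v_e)=e\cup\set{c}$ ordered so that $c$ is last, whence approving $c$ forces approving all of $e$; the target committee is $\set{c,d}$, and no voter with uncertainty ever approves a candidate of $\set{d,z}$. Crucially, the synergy you were trying to cancel is instead \emph{used}: the competitor $\set{c,u}$ beats $\set{c,d}$ by exactly $(w(2)-w(1))(\deg_{B}(u)-1)$, which encodes the covering constraint $\deg_{B}(u)\le 1$, while uncertainty-free balancing voters (calibrated in two sub-cases according to whether $w(2)-w(1)\le w(1)$ or not) force $|B|\ge q$. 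To salvage your route you would need a gadget in which no voter simultaneously approves a committee member and a competitor's non-committee member; as written, your reduction has a genuine gap precisely where you predicted the real work would be.
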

\begin{proof}
We show the hardness for $\poscompar{2}$ here and can extend this to any $k\geq 2$ using \Cref{lem:abcu_extendk}.
Without loss of generality we have $w(1) = 1$ and $w(2) = 1+x$ where $x > 0$.
We show two reductions, depending on whether $x \in (0,1]$ or $x > 1$. Both reductions are from \emph{exact cover by 3-sets} (X3C), which is the following decision problem: Given a set $U = \set{u_1, \dots, u_{3q}}$ and a collection $E = \set{e_1, \dots, e_m}$ of 3-element subsets of $U$, can we cover all the elements of $U$ using $q$ pairwise disjoint sets from $E$? This problem is NP-complete~\citep[Problem~SP2]{GaJo79a}. 

\begin{table}[t]
    \centering
    \scalebox{1}{
        \begin{tabular}{cccc}
            \toprule
            \# of voters & $\top$ & $\middle$ & $\bottom$ \\
            \midrule
            \pbox{5cm}{\,\,$m$ (one for each \\ $e = \set{u,u',u^*} \in E$)} & -- & $u \rightarrow u' \rightarrow u^* \rightarrow c$ & $\set{d,z} \cup U \setminus \set{u,u',u^*}$ \\
            $\nicefrac{q}{2}$ & $\set{z}$ & -- & $\set{c,d} \cup U$ \\
            $\nicefrac{q}{2}$ & $\set{z} \cup U$ & -- & $\set{c,d}$ \\
            $\nicefrac{q}{2}$ & $\set{d}$ & -- & $\set{c,z} \cup U$ \\
            $\nicefrac{q}{2}$ & $\set{d} \cup U$ & -- & $\set{c,z}$ \\
            1 & $\set{c,d,z}$ & -- & $U$ \\
            \bottomrule
        \end{tabular}
    }
    \caption{Overview of the incomplete approval profile used in the first part of the proof of \Cref{thm:linearPoscom2Thiele}. The left-most column depicts the number of voters with the respective incomplete preferences.}
    \label{tbl:linearPoscomThiele1}
\end{table}
We start with the case where $x \in (0,1]$.
Given $U$ and $E$, we construct an instance of $\poscompar{2}$ under $w$-Thiele. The candidate set is $C = U \cup \set{c, d, z}$. 
The partial profile $\P = \P_1 \circ \A_2$ is the concatenation of the following two parts. The profile is also depicted in  \Cref{tbl:linearPoscomThiele1}.
\begin{enumerate}
    \item The first part, $\P_1$, consists of a voter for every set in $E$. For every $e \in E$, voter $v_e$ has $\top(v_e) = \emptyset, \, \middle(v_e) = e \cup \set{c}$, and $\bottom(v_e) = (U \setminus e) \cup \set{d,z}$. The linear order on $\middle(v_e)$ is an arbitrary order that ranks $c$ last. This means that if $v_e$ approves $c$, then she also approves the candidates of~$e$. The idea is that approving $c$ indicates that $e$ is in the cover, and disapproving $c$ indicates that $e$ is not in the cover.
    \item The second part, $\A_2$, consists of $2q+1$ voters without uncertainty (voters with $\middle(i) = \emptyset$). Assume \wilog that $q$ is even. (This is possible since for odd $q$ we can simply add three more elements to $U$ and one more set to $E$ covering exactly these three new elements. This does not change the existence of an exact cover.) We add $q/2$ voters approving $\{z\}$, $q/2$ voters approving $\{z\} \cup U$, $q/2$ voters approving $\{d\}$, $q/2$ voters approving $\{d\} \cup U$, and a single voter approving $\set{c,d,z}$. 
\end{enumerate}

We analyze some properties of the profile and the scores of different candidate sets. Let $\A = \A_1 \circ \A_2$ be a completion of $\P$. Assume, for now, that for every voter $v_e$, we either have $A(v_e) = \emptyset$ or $A(v_e) = e \cup \set{c}$. For the set $\set{d,z}$, there are $2q$ voters in $\A_2$ where $|A(v) \cap \set{d,z}| = 1$ and a single voter where $|A(v) \cap \set{d,z}| = 2$. Recall that $w(1) = 1, w(2) = 1+x$, hence $\tsc(\A_2, \set{d,z}) = 2q + 1 + x$. The voters of $\A_1$ do not approve $d$ and $z$, therefore 
\[\tsc(\A, \set{d,z}) = \tsc(\A_2, \set{d,z}) = 2q + 1 + x.\]
Similarly, we have $\tsc(\A_2, \set{c,d}) = \tsc(\A_2, \set{c,z}) = q + 1 + x$. Define $B = \set{e \in E : c \in A(v_e)}$, we get that $\tsc(\A_1, \set{c,d}) = \tsc(\A_1, \set{c,z}) = |B|$, and therefore
\begin{equation}
    \tsc(\A, \set{c,d}) = \tsc(\A, \set{c,z}) 
    = |B| + q + 1 + x 
    =  \tsc(\A, \set{d,z}) + (|B|-q). \label{eq:cdFirst}
\end{equation}
Let $u \in U$, let $\deg_{B}(u)$ be the number of subsets $e \in B$ which contain $u$. Note that $\deg_{B}(u)$ is the number of voters of $\A_1$ that approve $u$. In $\A_2$, the scores of $\set{d,u}, \set{z,u}$ satisfy
\begin{align*}
    \tsc(\A_2, \set{d,u}) = \tsc(\A_2, \set{z,u}) 
    = (q+1) + \frac{q}{2}(1+x) 
    = \rpar{\frac{3}{2} + \frac{x}{2}} q + 1,
\end{align*}
and the score of $\set{c,u}$ in $\A_2$ is $\tsc(\A_2, \set{c,u}) = (q+1)$.
Then, the scores of the sets  in $\A$ are
\begin{align}
    \tsc(\A, \set{d,u}) = \tsc(\A, \set{z,u}) &= \deg_{B}(u) +  \rpar{\frac{3}{2} + \frac{x}{2}}q + 1, \label{eq:duFirst}\\
    \tsc(\A, \set{c,u}) &= |B| + \deg_{B}(u)x + \tsc(\A_2, \set{c,u}) \notag = |B| + \deg_{B}(u)x + q + 1 \notag\\
    &= \tsc(\A, \set{c,d}) + x(\deg_{B}(u)-1). \label{eq:cuFirst}
\end{align}
Lastly, for every pair $u, u' \in U$ we have $\tsc(\A_2, \set{u,u'}) = q(1+x)$.

Next, we show that $\set{c,d}$ is a possible committee for $\P$ if and only if there exists an exact cover. Let $\A = \set{A(v_e)}_{e \in E} \circ \A_2$ be a completion where $\set{c,d}$ is a winning committee. Define $B = \set{e \in E : c \in A(v_e)}$. Observe that for every voter $v_e$ such that $e \notin B$ (\ie $v_e$ does not approve $c$), we can assume that $A(v_e) = \emptyset$ (changing $A(v_e)$ to $\emptyset$ does not change the score of $\set{c,d}$ and cannot increase the score of other candidate sets). Also note that by the definition of $\P$, for every voter $v_e$ with $c \in A(v_e)$ we have $A(v_e) = e \cup \set{c}$. Hence, our assumption on $\A$ from the analysis above holds. We must have $\tsc(\A, \set{c,d}) \geq \tsc(\A, \set{d,z})$ hence $|B| \geq q$ by \Cref{eq:cdFirst}. For every $u \in U$ we must have $\tsc(\A, \set{c,d}) \geq \tsc(\A, \set{c,u})$ hence $\deg_{B}(u) \leq 1$ by \Cref{eq:cuFirst}. Overall, $B$ contains at least $q$ sets, and the degree of every $u \in U$ is at most 1, therefore $B$ is an exact cover.

Conversely, let $B$ be an exact cover, we have $|B| = q$ and $\deg_{B}(u) = 1$ for all $u \in U$. Define a completion $\A = \set{A(v_e)}_{e \in E} \circ \A_2$. For every $e \in E$, if $e \in B$ then $A(v_e) = e \cup \set{c}$, otherwise $A(v_e) = \emptyset$. Observe that by Equation~\ref{eq:cdFirst},
\[\tsc(\A, \set{c,d}) = \tsc(\A, \set{c,z}) = \tsc(\A, \set{d,z}) = 2q + 1 + x,\]
and by Equation~\ref{eq:cuFirst}, $\tsc(\A, \set{c,u}) = \tsc(\A, \set{c,d})$ for all $u \in U$. Let $u \in U$, recall that $u$ is covered by exactly one set in $B$. If $x = 1$ then by Equation~\ref{eq:duFirst} we have
\begin{align*}
    \tsc(\A, \set{d,u}) = \tsc(\A, \set{z,u}) = 1 +  \rpar{\frac{3}{2} + \frac{x}{2}}q + 1 = 2q + 2 = 2q + 1 + x = \tsc(\A, \set{c,d}).
\end{align*}
Otherwise, $x < 1$, hence $3/2 + x/2 < 2$, and for large enough $q$ we have
\begin{align*}
    \tsc(\A, \set{d,u}) = \tsc(\A, \set{z,u}) = 1 +  \rpar{\frac{3}{2} + \frac{x}{2}}q + 1 < 2q + 1 + x = \tsc(\A, \set{c,d}).
\end{align*}
Similarly, we also have  $\tsc(\A, \set{u,u'}) < \tsc(\A, \set{c,d})$ for every pair $u, u' \in U$. Overall, $\set{c,d}$ is a winning committee of $\A$, therefore it is a possible committee. This completes the proof for the case where $x \in (0,1]$.

\begin{table}[t]
    \centering
    \scalebox{1}{
        \begin{tabular}{cccc}
            \toprule
            \# of voters & $\top$ & $\middle$ & $\bottom$ \\
            \midrule
            \pbox{5cm}{\,\,$m$ (one for each \\ $e = \set{u,u',u^*} \in E$)} & -- & $u \rightarrow u' \rightarrow u^* \rightarrow c$ & $\set{d,z} \cup U \setminus \set{u,u',u^*}$ \\
            $\nicefrac{q}{x}$ & $\set{d,z}$ & -- & $\set{c} \cup U$ \\
            $\nicefrac{q}{x}$ & $U$ & -- & $\set{c,d,z}$ \\
            1 & $\set{c,d,z}$ & -- & $U$ \\
            \bottomrule
        \end{tabular}
    }
    \caption{Overview of the incomplete approval profile used in the second part of the proof of \Cref{thm:linearPoscom2Thiele}. The left-most column depicts the number of voters with the respective incomplete preferences.}
    \label{tbl:linearPoscomThiele2}
\end{table}
Next, we prove the hardness for weight functions which satisfy $x > 1$. We again use a reduction from \emph{exact three cover (X3C)} as defined above. Given $U$ and $E$, we construct an instance of $\poscompar{2}$ under $w$-Thiele. The candidate set is $C = U \cup \set{c, d, z}$. The partial profile $\P = \P_1 \circ \A_2$ is the concatenation of two parts again that we describe next. The profile is also depicted in  \Cref{tbl:linearPoscomThiele2}.
\begin{enumerate}
    \item The first part, $\P_1$, is the same as in the previous case: for every $e \in E$ we have a voter $i_e$ with $\top(i_e) = \emptyset, \middle(i_e) = e \cup \set{c}$ and $\bottom(i_e) = U \cup \set{d,z}$. The linear order on $\middle(i_e)$ is an arbitrary order that ranks $c$ last. Analogous to the previous case, the idea is that approving $c$ indicates that $e$ is in the cover, and disapproving $c$ indicates that $e$ is not in the cover.
    \item Assume \wilog that $q / x$ is an integer (as we already argued in the first part of the proof, we can always add a constant number of elements to $U$ and $E$ such that $q$ will be a multiple of $x$, without changing the existence of an exact cover). The second part $\A_2$ consists of $(2q/x)+1$ voters without uncertainty. We have $q/x$ voters approving $\set{z,d}$, $q/x$ voters approving $U$, and a single voter approving $\set{c,d,z}$. This completes the construction of the profile.
\end{enumerate}

Let $\A = \A_1 \circ \A_2$ be a completion of $\P$. As in the proof for the previous case, assume that for every voter $v_e$ either $A(v_e) = \emptyset$ or $A(v_e) = e \cup \set{c}$. We analyze the scores of different candidate sets under $\A$. For the set $\set{d,z}$, there are $(q/x)+1$ voters in $\A_2$ where $|A(i) \cap \set{d,z}| = 2$, and the other voters do not approve $z,d$. The voters of $\A_1$ do not approve $d$ and $z$. Since $w(1) = 1, w(2) = 1+x$, we have
\begin{align*}
    \tsc(\A, \set{d,z}) &= \tsc(\A_2, \set{d,z}) = \rpar{\frac{q}{x}+1}(1+x) = q \rpar{\frac{1}{x} + 1} + 1 + x.
\end{align*}
Similarly, we have
\[\tsc(\A_2, \set{c,d}) = \tsc(\A_2, \set{c,z}) = \frac{q}{x} + 1 + x.\]
Define $B = \set{e \in E : c \in A(v_e)}$ as above. Then again, $\tsc(\A_1, \set{c,d}) = \tsc(\A_1, \set{c,z}) = |B|$,
hence 
\begin{align}
    \tsc(\A, \set{c,d}) &= \tsc(\A, \set{c,z}) = \frac{q}{x} + 1 + x + |B| = \tsc(\A, \set{d,z}) + |B|-q. \label{eq:cdSecond}
\end{align}
Let $u \in U$, let $\deg_{B}(u)$ be the number of sets in $B$ which cover $u$. Note that $\deg_{B}(u)$ is the number of voters of $\A_1$ that approve $u$. The scores of $\set{d,u}, \set{z,u},$ and $\set{c,u}$ are as follows. Recall that $x > 1$ therefore $2 / x < 1 + 1/x$.
\begin{align}
    \tsc(\A, \set{d,u}) &= \tsc(\A, \set{z,u}) \notag = \deg_{B}(u) + \rpar{\frac{2q}{x}+1} \notag= \deg_{B}(u) + q \cdot \frac{2}{x} + 1 \notag\\
    &< \deg_{B}(u) + q\rpar{\frac{1}{x} + 1} + 1, \label{eq:duSecond}\\
    \tsc(\A, \set{c,u}) &= \rpar{\frac{q}{x} + 1} + |B| + \deg_{B}(u) \cdot x = \tsc(\A, \set{c,d}) + x \cdot (\deg_{B}(u) - 1). \label{eq:cuSecond}
\end{align}
Lastly, for every pair $u, u' \in U$ we have $\tsc(\A_2, \set{u,u'}) = \tsc(\A, \set{d,z})-1-x$.

From Equations~\ref{eq:cdSecond}, \ref{eq:duSecond} and \ref{eq:cuSecond} we can deduce that $\set{c,d}$ is a possible committee for $\P$ if and only if there exists an exact cover, by a very similar argument to the proof of the previous case where $x \in (0,1]$.
\end{proof}

Next, we use \Cref{lemma:thieleReduction} in order to generalize \Cref{thm:linearPoscom2Thiele} to every function with three consecutive different values.

\begin{theorem}
\label{thm:linearPoscomkThieleConsecutiveDifferent}
Let $w$ be a weight function such that $w(k-2) < w(k-1) < w(k)$ for some $k \geq 2$. Then, in the linear model, $\poscompar{k}$ is NP-complete under $w$-Thiele.
\end{theorem}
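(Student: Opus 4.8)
The plan is to mirror the proof of \Cref{thm:simplePoscomkThieleConsecutiveEqual}, applying \Cref{lemma:thieleReduction} to lift a hardness result from committee size $2$ to committee size $k$. The one genuinely new decision is the choice of source rule. In the $3$VA setting one reduces from Chamberlin--Courant, but in the linear model CC is tractable by \Cref{thm:linearPoscomkThiele1Weight}, so reducing from CC is hopeless here; a truly non-binary source rule is required. Exactly such a rule is supplied by \Cref{thm:linearPoscom2Thiele}, whose hardness hypothesis is precisely $w_1(2) > w_1(1) > 0$. The task is therefore to manufacture, from the three distinct consecutive values of $w$, a weight function $w_1$ satisfying this hypothesis and compatible with \Cref{lemma:thieleReduction}.

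Concretely, set $t = k-2$ and $q = 1/(w(k-1) - w(k-2))$; this is well-defined and positive because $w(k-1) > w(k-2)$. Define a weight function $w_1$ by $w_1(x) = q \cdot (w(x+t) - w(t))$ for $x \in \set{0,1,2}$, and extend it by $w_1(x) = w_1(2)$ for $x > 2$, which keeps $w_1$ non-decreasing with $w_1(0) = 0$. Then I would verify the three relevant values: $w_1(0) = q(w(t) - w(t)) = 0$, next $w_1(1) = q(w(k-1) - w(k-2)) = 1$ by the choice of $q$, and finally $w_1(2) = q(w(k) - w(k-2))$. Since $w(k) > w(k-1)$ and $q > 0$, we obtain $w_1(2) - w_1(1) = q(w(k) - w(k-1)) > 0$, so $w_1(2) > w_1(1) > 0$, and $w_1$ is a legitimate normalized weight function.

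With these choices the identity $w_1(x) = q(w(x+t) - w(t))$ holds for every $x \in \set{0,1,2}$, so \Cref{lemma:thieleReduction} (with source committee size $2$ and shift $t = k-2$) yields a polynomial-time reduction from $\poscompar{2}$ under $w_1$-Thiele to $\poscompar{2+t} = \poscompar{k}$ under $w$-Thiele. Because $w_1(2) > w_1(1) > 0$, \Cref{thm:linearPoscom2Thiele} tells us that $\poscompar{2}$ under $w_1$-Thiele is NP-complete in the linear model; hence $\poscompar{k}$ under $w$-Thiele is NP-hard. Membership in NP holds for all Thiele rules with fixed committee size, as already noted, so $\poscompar{k}$ under $w$-Thiele is NP-complete.

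The step deserving the most care is checking that the two strict inequalities $w(k-2) < w(k-1) < w(k)$ transfer cleanly to $w_1(2) > w_1(1) > 0$ --- this is what guarantees the source rule is non-binary, so that \Cref{thm:linearPoscom2Thiele} rather than the (tractable) CC result is the correct hardness source --- together with confirming that $w_1$ remains a valid non-decreasing, normalized weight function. Both facts are immediate once $q > 0$ is established, and the rest is a direct invocation of the already-proved \Cref{lemma:thieleReduction} and \Cref{thm:linearPoscom2Thiele}.
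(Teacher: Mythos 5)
Your proposal is correct and follows essentially the same route as the paper: define a shifted weight function whose first three values inherit the two strict inequalities, invoke \Cref{thm:linearPoscom2Thiele} for hardness at committee size $2$, and lift via \Cref{lemma:thieleReduction} with $t=k-2$. The only (immaterial) difference is that you scale by $q = 1/(w(k-1)-w(k-2))$ to normalize $w_1(1)=1$, whereas the paper simply takes $q=1$ and uses $w'(x)=w(x+k-2)-w(k-2)$ directly.
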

\begin{proof}
    Define a weight function $w'$ where $w'(x) = w(x+k-2) - w(k-2)$ for every $x \geq 0$. In particular, $w'(0) = 0, w'(1) = w(k-1)-w(k-2) > 0$ and $w'(2) = w(k)-w(k-2) > w'(1)$. By \Cref{thm:linearPoscom2Thiele}, $\poscompar{2}$ is NP-complete under $w'$-Thiele. By setting $t = k-2$ and a function $g(x) = x + w(k-2)$ we get $w(x+t) = g(w'(x)) = w'(x) + w(k-2)$ for every $x \in \set{0, 1, 2}$.   
    By \Cref{lemma:thieleReduction}, there is a reduction from $\poscompar{2}$ under $w'$-Thiele to $\poscompar{k}$ under $w$-Thiele. Therefore, $\poscompar{k}$ is NP-complete under $w$-Thiele.
\end{proof}

Clearly, Theorem~\ref{thm:linearPoscomkThieleConsecutiveDifferent} applies to all Thiele rules with strictly increasing weight functions (such as AV and PAV). 

\begin{corollary}
\label{cor:linearPoscomkThieleIncreasing}
Let $w$ be a strictly increasing weight function. Then, in the linear model, $\poscompar{k}$ is NP-complete under $w$-Thiele for all fixed $k \geq 2$.
\end{corollary}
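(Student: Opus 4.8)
The plan is to derive this immediately from Theorem~\ref{thm:linearPoscomkThieleConsecutiveDifferent}, which already establishes NP-completeness for any weight function that exhibits three consecutive strictly increasing values. First I would fix an arbitrary $k \geq 2$. By the definition of a strictly increasing weight function we have $w(x) < w(x+1)$ for all $x \in \natural$; since $k \geq 2$ guarantees $k-2 \geq 0$, all three arguments $k-2, k-1, k$ lie in the domain, and chaining the two strict inequalities gives $w(k-2) < w(k-1) < w(k)$. This is exactly the hypothesis demanded by Theorem~\ref{thm:linearPoscomkThieleConsecutiveDifferent}.

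Invoking that theorem then yields directly that $\poscompar{k}$ is NP-complete under $w$-Thiele in the linear model, and since $k$ was arbitrary the statement holds for every $k \geq 2$. NP-membership for each fixed $k$ is already supplied by the observation (in the preamble to this section) that winning committees of a fixed size can be computed in polynomial time once a complete profile is given, so only NP-hardness requires the reduction, which the theorem provides. There is essentially no obstacle here: the corollary is a pure specialization of the theorem to the (important) subclass of weight functions that are strictly increasing everywhere, a class that includes AV with $w(x)=x$ and PAV with $w(x)=\sum_{i=1}^{x} 1/i$. The only point worth a sentence of care is confirming that ``strictly increasing'' is interpreted as strict monotonicity at every step, so that the required strict inequalities indeed hold at the particular triple $k-2, k-1, k$ rather than merely on average; under the normalization and monotonicity conventions fixed in the preliminaries this is immediate.
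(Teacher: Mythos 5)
Your proposal is correct and matches the paper's reasoning exactly: the corollary is stated as an immediate specialization of Theorem~\ref{thm:linearPoscomkThieleConsecutiveDifferent}, since a strictly increasing $w$ satisfies $w(k-2) < w(k-1) < w(k)$ for every $k \geq 2$. The paper gives no further argument, so your slightly more explicit treatment (including the note on NP-membership) is just a careful spelling-out of the same one-line derivation.
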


As a special case, Corollary~\ref{cor:linearPoscomkThieleIncreasing} reasserts the result of \citet{BGL+13a} that $\poscompar{k}$ is NP-complete for the AV rule. (The hardness result by \citet{BGL+13a} even applies to the special case in which all candidates are unknown.)

\subsection{Poset Approval}\label{sec:poscom-proof}
Combining the results from the previous two sections, we can now formulate the main result of this section. 

\begin{theorem}
\label{thm:posetPoscomkThiele}
In the poset approval model, for every weight function $w$ there exists an integer $k^*$ such that $\poscompar{k}$ is NP-complete under $w$-Thiele, for all fixed $k \geq k^*$.
\end{theorem}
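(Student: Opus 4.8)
The plan is to combine the two hardness results already established for the special cases---Theorem~\ref{thm:simplePoscomkThieleConsecutiveEqual} for the 3VA model and Theorem~\ref{thm:linearPoscomkThieleConsecutiveDifferent} for the linear model---and to exploit the fact that both 3VA and linear incomplete approval are special cases of poset approval. As observed earlier in the paper, NP-hardness of $\poscompar{k}$ in either of these restricted models immediately implies NP-hardness in poset approval (any restricted instance is also a valid poset instance), and membership in NP for $\poscompar{k}$ in poset approval was already noted. So it suffices to exhibit, for each weight function $w$, some $k \geq 2$ for which one of the two preconditions $w(k-2) < w(k-1) = w(k)$ or $w(k-2) < w(k-1) < w(k)$ holds.

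The key observation is that these two preconditions share the requirement $w(k-2) < w(k-1)$ and differ only in whether $w(k-1) = w(k)$ or $w(k-1) < w(k)$; since $w$ is non-decreasing, exactly one of these two sub-cases must hold for any fixed $k$. Hence it is enough to locate a single index $j \geq 1$ with $w(j-1) < w(j)$, and then set $k = j+1 \geq 2$. To find such an index I would invoke the normalization assumption: let $t$ be the smallest integer with $w(t) > 0$, so that $w(t) = 1$ and $w(t-1) = 0$, and note that $t \geq 1$ because $w(0) = 0$ while $w$ is not identically zero. Taking $k = t+1$ then yields $w(k-2) = w(t-1) = 0 < 1 = w(t) = w(k-1)$.

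With this $k$ fixed, the proof splits into two cases according to the value $w(k) = w(t+1) \geq 1$. If $w(t+1) = 1$, then $w(k-2) < w(k-1) = w(k)$ and Theorem~\ref{thm:simplePoscomkThieleConsecutiveEqual} gives NP-completeness of $\poscompar{k}$ in the 3VA model; if $w(t+1) > 1$, then $w(k-2) < w(k-1) < w(k)$ and Theorem~\ref{thm:linearPoscomkThieleConsecutiveDifferent} gives NP-completeness in the linear model. In both cases the hardness transfers up to poset approval, and together with NP membership this establishes NP-completeness, completing the argument. The main conceptual step is recognizing that the first strict increase of $w$---which always exists and, by normalization, occurs precisely at the first positive value---simultaneously feeds into whichever of the two prior theorems applies. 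Once this is seen there is essentially no computation left, so I do not anticipate a genuine technical obstacle beyond checking that the case split is exhaustive, which is guaranteed by the monotonicity of $w$.
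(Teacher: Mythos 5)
Your proposal is correct and follows essentially the same route as the paper's proof: locate the first index $t$ with $w(t)>0$, set $k=t+1$ so that $w(k-2)<w(k-1)$, and then split on whether $w(k-1)=w(k)$ (invoking the 3VA hardness result) or $w(k-1)<w(k)$ (invoking the linear-model hardness result), with hardness transferring to poset approval since both are special cases. No substantive differences to note.
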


\begin{proof}
Let $w$ be a weight function, and let $j$ be the minimal index such that $w(j) > 0$. Since $w$ is non-decreasing, either $w(j+1) = w(j) > w(j-1)$ or $w(j+1) > w(j) > w(j-1)$. In the first case, by Theorem~\ref{thm:simplePoscomkThieleConsecutiveEqual}, $\poscompar{j+1}$ is NP-complete under $w$-Thiele in the 3VA model. In the second case, by Theorem~\ref{thm:linearPoscomkThieleConsecutiveDifferent},  $\poscompar{j+1}$ is NP-complete under $w$-Thiele in the linear model. We can extend the hardness in both cases to any $k\geq j+1$ by using \Cref{lem:abcu_extendk} again.
The result follows since 3VA and the linear model are special cases of poset approval. 
\end{proof}

\subsection{Possible Committee Members}
We now consider a different decision problem where we do not focus on whole committees but rather on individual candidates.
We say candidate $c$ is a \e{possible committee member} if there exists a completion $\A$ of $\P$ such that there exists a winning committee $W$ with $c \in W$. For an ABC rule $r$ and a committee size $k$, we consider the computational problem $\posmempar{k}$, where the input consists of a partial profile $\P$ and a candidate $c$, and the goal is to determine whether $c$ is a possible committee member under $r$.

Note that for every voting rule, if $\poscompar{k}$ is solvable in polynomial time, then we can also solve $\posmempar{k}$ efficiently: given a candidate $c$, test whether $\set{c, d_1, \dots, d_{k-1}}$ is a possible committee for all other $k-1$ candidates $d_1, \dots, d_{k-1}$. Since $k$ is fixed, this can be done efficiently.
By Theorems~\ref{thm:simplePoscomkApproval} and~\ref{thm:linearPoscomkThiele1Weight}  we can deduce the following.

\begin{corollary}
For all fixed $k \geq 1$, $\posmempar{k}$ is solvable in polynomial time:
\begin{enumerate}
\item In the 3VA model under approval voting;
\item In the linear model under every binary Thiele rule.
\end{enumerate}
\end{corollary}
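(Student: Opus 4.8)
The plan is to use the generic polynomial-time Turing reduction from $\posmempar{k}$ to $\poscompar{k}$ sketched just before the statement, which works for any ABC rule and in any of the three incompleteness models. Given a partial profile $\P$ and a target candidate $c$, I would enumerate every size-$k$ subset $W \subseteq C$ with $c \in W$; there are $\binom{m-1}{k-1} = O(m^{k-1})$ of them, which is polynomial in the input size since $k$ is a fixed constant. For each such $W$ I would run the corresponding polynomial-time decision procedure for $\poscompar{k}$, and output ``yes'' exactly when at least one of these $W$ is reported to be a possible committee.

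Correctness follows directly from the definitions. If some enumerated $W \ni c$ is a possible committee, there is a completion $\A$ of $\P$ in which $W$ is a winning committee, and since $c \in W$ this witnesses that $c$ is a possible committee member. Conversely, if $c$ is a possible committee member, then by definition there is a completion $\A$ with a winning committee $W$ of size $k$ such that $c \in W$; this $W$ is among the enumerated sets and is a possible committee, so the procedure reports ``yes.'' The algorithm makes $O(m^{k-1})$ calls to the $\poscompar{k}$ subroutine plus polynomial bookkeeping, hence runs in polynomial time whenever $\poscompar{k}$ does.

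It then remains only to instantiate the subroutine in each case. For part~(1) I invoke Theorem~\ref{thm:simplePoscomkApproval}, which solves $\poscompar{k}$ in polynomial time in the 3VA model under Approval Voting; for part~(2) I invoke Theorem~\ref{thm:linearPoscomkThiele1Weight}, which does the same in the linear model for every binary Thiele rule. Substituting either into the reduction above yields a polynomial-time algorithm for $\posmempar{k}$. There is no genuine obstacle here: the only point requiring care is that fixing $k$ keeps the number of enumerated committees polynomial in $m$, which it does, so the claim is an immediate corollary of the two cited theorems combined with the self-reduction.
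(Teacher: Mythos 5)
Your argument is exactly the one the paper uses: it reduces $\posmempar{k}$ to $O(m^{k-1})$ instances of $\poscompar{k}$ by enumerating all size-$k$ committees containing $c$, and then instantiates the subroutine with Theorem~\ref{thm:simplePoscomkApproval} for the 3VA/AV case and Theorem~\ref{thm:linearPoscomkThiele1Weight} for the linear/binary-Thiele case. The proposal is correct and complete.
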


We also obtain a tractability result in a setting for which the corresponding $\poscompar{k}$ problem is intractable. 

\begin{theorem}
\label{thm:linearPosmemkApproval}
In the linear model, $\posmempar{k}$ is solvable in polynomial time under Approval Voting, for all fixed $k \geq 1$. 
\end{theorem}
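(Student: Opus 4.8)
The plan is to exploit the separability of Approval Voting. Since $w(x)=x$, for any completion $\A$ and any set $W$ we have $s(\A,W)=\sum_{a\in W}s(\A,\set a)$, where $s(\A,\set a)$ is the \emph{approval score} of $a$, i.e.\ the number of voters approving $a$ in $\A$. Hence a winning committee is precisely a set of $k$ candidates with the largest approval scores, and $c$ lies in \emph{some} winning committee of $\A$ if and only if at most $k-1$ candidates have strictly larger approval score than $c$ in $\A$. Therefore $c$ is a possible committee member exactly when there exists a completion in which the number of candidates strictly beating $c$ (in approval score) is at most $k-1$. It thus suffices to compute, over all completions, the minimum possible number of such beaters of $c$.

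I claim that a single canonical completion $\A^\ast$ attains this minimum. Define $\A^\ast$ by: for every voter $v$ with $c\in\middle(v)$, set the cutoff exactly at $c$, i.e.\ let $\middle_{\A^\ast}(v)$ consist of $c$ together with all candidates ranked above $c$ in $\succ_v$; for every voter $v$ with $c\notin\middle(v)$, set $\middle_{\A^\ast}(v)=\emptyset$. Intuitively, $\A^\ast$ maximizes the approval score of $c$ while keeping every competing candidate's score as small as possible. The algorithm simply constructs $\A^\ast$, computes all approval scores, and accepts if and only if at most $k-1$ candidates beat $c$; as all scores are computable in polynomial time, this is efficient.

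Correctness rests on an exchange argument showing that $\A^\ast$ globally minimizes the beater count of $c$. Starting from an arbitrary completion $\A$, I transform it into $\A^\ast$ voter by voter without ever increasing that count. For a voter $v$ with $c\in\middle(v)$ whose cutoff lies before $c$, moving the cutoff forward to $c$ adds $1$ to the score of $c$ and to each newly approved candidate ranked above $c$: each such candidate keeps its margin over $c$ unchanged, while every other candidate's margin over $c$ drops by $1$, so no new strict beater is created. If instead the cutoff lies beyond $c$, pulling it back to $c$ only lowers the scores of candidates ranked below $c$ while leaving $c$ fixed. For a voter $v$ with $c\notin\middle(v)$, the cutoff does not affect $s(\A,\set c)$, so setting $\middle_\A(v)=\emptyset$ only decreases other candidates' scores. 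Each step leaves the beater count of $c$ unchanged or smaller, so $\A^\ast$ is optimal.

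The delicate point, which I expect to be the main obstacle, is the coupling in the exchange argument: approving $c$ in a voter with $c\in\middle(v)$ forces approving all higher-ranked candidates, so one cannot raise $c$'s score in isolation. The observation that dissolves this difficulty is that these forced approvals increase the competing candidates' scores by exactly the same amount as $c$'s, preserving their margins, so they can never become new beaters; only candidates that do \emph{not} come along change their standing relative to $c$, and their margins can only shrink. I anticipate that the part requiring the most care is the margin bookkeeping across the three cutoff cases, together with confirming that emptying $\middle_\A(v)$ for the remaining voters is simultaneously optimal against all competitors.
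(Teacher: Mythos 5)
Your proof is correct and follows essentially the same route as the paper: the canonical completion you define (cutoff exactly at $c$ for voters with $c\in\middle(v)$, empty middle otherwise) is precisely the paper's construction, and your margin-bookkeeping exchange argument matches the paper's per-voter case analysis showing $s(A(v),c)$ increases by at least as much as any competitor's score. No gaps.
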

\begin{proof}
Given a candidate $c$ and a partial profile $\P = (P(1), \dots, P(n))$, we construct a completion $\A$ of $\P$ as follows. 
For all voters $v$ with  $c \in \top(v)$ or $c \in \bottom(v)$, we let $\middle_\A(v) = \emptyset$. Otherwise, $c \in \middle(v)$ and we define $\middle_\A(v)$ to be the smallest prefix of the linear order over $\middle(v)$ that includes $c$. 
We show that $c$ is a possible committee member if and only if $c$ is a member of a committee in $\A$.

Let $\A'$ be a completion where $c$ is a committee member. Recall from  \Cref{sec:abcu_prelims} that AV can also be defined using scores of individual candidates. By this definition, there are at most $k-1$ candidates with a score greater than $c$ in $\A'$. Let $v$ be a voter, consider three cases. First, if $c \in \top(v) \cup \bottom(v)$, then $\middle_\A(v) = \emptyset$, hence $s(A(v), c) = s(A'(v), c)$ and $s(A(v), d) \leq s(A'(v), d)$ for every other candidate $d \in C$. Second, if $c \in \middle(v)$ and $c \in A'(v)$, then $\middle_\A(v) \subseteq \middle_{\A'}(v)$, hence we also have $s(A(v), c) = s(A'(v), c)$ and $s(A(v), d) \leq s(A'(v), d)$ for every other candidate $d$. Finally, if $c \in \middle(v)$ and $c \notin A'(v)$, then $s(A(v), c) = s(A'(v), c) + 1$ and $s(A(v), d) \leq s(A'(v), d) + 1$ for every other candidate $d$. We can deduce that in $\A$ there are also at most $k-1$ candidates with a score greater than $c$, therefore $c$ is a committee member in $\A$.
\end{proof}

\section{Computing Necessary Committees}
\label{sec:neccom}

The next decision problem we study concerns necessary winning committees.
For an ABC rule $\rrule$ and a (fixed) committee size $k$, we consider the following decision problem, which we refer to as $\neccompar{k}$: 
Given a partial profile $\P$ and a subset $W\subseteq C$ of candidates of size $|W|=k$, 
determine whether $W$ is a necessary committee under $\rrule$.
As above, we parametrize the problem by the committee size $k$ to evade hardness even in the case of complete information.

We show that $\neccompar{k}$ is solvable in polynomial time for all \abc scoring rules in all three models of incompleteness. 
Each member of the family of ABC scoring rules is associated with a \emph{scoring function} $f \colon \naturals \times \naturals \rightarrow \reals$ satisfying $f(x,y) \geq f(x', y)$ for $x \geq x'$. For a voter $v$ and subset $S \subseteq C$, the score of $S$ given approval set $A(v)$ is 
\[\tsc(A(v), S) = f(|A(v) \cap S|, |A(v)|),\] 
and the total score is $\tsc(\A, S) = \sum_{v \in V} \tsc(A(v), S)$. The rule defined by $f$ outputs the subset(s) of size $k$ with the highest score. 
Observe that \thiele rules are a special case of ABC scoring rules, where the score $\tsc(A(v), S)$ that $v$ assigns to $S$ depends only on $|A(v) \cap S|$. An example for an ABC scoring rule which is not a Thiele rule is \e{Satisfaction Approval Voting} \citep{BrKi14a}, where the score a voter $v$ contributes to a $S \subseteq C$ is $\tsc(A(v), S) = |S \cap A(v)| / |A(v)|$. 
When we discuss scoring rules, we assume that $f(x,y)$ is computable in polynomial time given $x$ and $y$. 

\begin{theorem}
\label{thm:posetNeccomkThiele}
Let $\rrule$ be an ABC scoring rule. Then, in the model of poset approval, for all fixed $k \geq 1$, $\neccompar{k}$ is solvable in polynomial time under $\rrule$.
\end{theorem}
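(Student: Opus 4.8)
The plan is to show that $W$ fails to be a necessary committee if and only if there is a single candidate $c \notin W$ and a completion $\A$ in which the worst candidate of $W$ (in terms of the marginal loss from swapping it out for $c$) is beaten by $c$. More precisely, $W$ is \emph{not} a necessary committee iff there exists a completion $\A$, some $a \in W$, and some $c \in C \setminus W$ such that $s(\A, W) < s(\A, (W \setminus \{a\}) \cup \{c\})$. The key structural observation is that this inequality involves only the candidates in $W$ together with the single swapped-in candidate $c$: for a fixed completion $\A$, the quantity $s(\A, W) - s(\A, (W \setminus \{a\}) \cup \{c\})$ decomposes as a sum over voters, and each voter's contribution depends only on how that voter's approval set intersects $W \cup \{c\}$. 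This means we can process voters independently and, for each voter, choose the completion of that voter's partial vote that is \emph{worst} for $W$ relative to the swap $(a,c)$.

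First I would fix a candidate $a \in W$ and a candidate $c \in C \setminus W$, and set $W' = (W \setminus \{a\}) \cup \{c\}$. The goal for this fixed pair is to decide whether there is a completion $\A$ with $s(\A, W') > s(\A, W)$; if such a completion exists for \emph{any} of the (at most $k \cdot (m-k)$, hence polynomially many) pairs $(a,c)$, then $W$ is not a necessary committee, and otherwise it is. For the fixed pair, I would maximize $s(\A, W') - s(\A, W) = \sum_{v \in V} \bigl( s(A(v), W') - s(A(v), W) \bigr)$ over all completions $\A$ of $\P$. Since each voter's term depends only on $A(v) \cap (W \cup \{c\})$ and the voters' completions are independent, the maximum over completions equals the sum over voters of the per-voter maxima. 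So for each voter $v$ I would iterate over all valid completions $\middle_\A(v)$ of that single partial vote \emph{restricted to how they intersect} $W \cup \{c\}$, compute $s(A(v), W') - s(A(v), W)$, and take the best. The number of distinct intersections with the fixed set $W \cup \{c\}$ of size $k+1$ is at most $2^{k+1}$, a constant since $k$ is fixed, so for each voter I can enumerate all candidate values of $(|A(v) \cap W'|, |A(v)|)$ and $(|A(v) \cap W|, |A(v)|)$ achievable by a valid completion and pick the one maximizing the difference.

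The one point requiring care is that $|A(v)|$ (the second argument of the scoring function $f$) depends on the \emph{entire} approval set of $v$, not just its intersection with $W \cup \{c\}$, so the per-voter optimization is over pairs of sizes rather than over intersections alone. I would handle this by noting that for a fixed target intersection pattern with $W \cup \{c\}$, the relevant degrees of freedom are the cardinality $|A(v) \cap (W \cup \{c\})|$ contributed inside the fixed set and the total cardinality $|A(v)|$; both range over polynomially many integer values, and for each voter one checks, via the poset constraints, which combinations are realizable by a valid completion (this is a matter of choosing a down-set of $\succ_v$ of appropriate size hitting the desired members of $\middle(v) \cap (W \cup \{c\})$, which can be decided greedily). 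The main obstacle is precisely this coupling through $|A(v)|$: one must verify that for each achievable intersection with $W \cup \{c\}$ and each achievable total size there is a genuine valid completion, and that maximizing each voter's difference independently is sound because the only interaction between voters is additive in the total score. Once this per-voter feasibility-and-maximization step is established, the overall algorithm loops over the polynomially many pairs $(a,c)$, and for each computes the optimum in polynomial time, yielding a polynomial-time decision procedure.
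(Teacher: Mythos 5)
Your overall algorithmic template---decompose the score difference $s(\A,W')-s(\A,W)$ voter by voter, optimize each voter's completion independently, and handle the dependence on $|A(v)|$ by enumerating achievable intersection patterns with a fixed small set together with achievable total sizes---is exactly the right machinery and matches the paper's proof. However, the very first step of your argument is wrong: it is \emph{not} true that $W$ fails to be a necessary committee if and only if some single-swap committee $(W\setminus\set{a})\cup\set{c}$ strictly beats it in some completion; a committee can be strictly beaten only by committees differing from it in two or more candidates. Concretely, take Chamberlin--Courant (an ABC scoring rule), $k=2$, no uncertainty at all (every $\middle(v)=\emptyset$, a valid poset-approval instance), six voters and five candidates $S_1,S_2,T_1,T_2,T_3$ with $A(v_1)=\set{S_1,T_1}$, $A(v_2)=\set{S_1,T_2}$, $A(v_3)=\set{S_1,T_3}$, $A(v_4)=\set{S_2,T_1}$, $A(v_5)=\set{S_2,T_2}$, $A(v_6)=\set{S_2,T_3}$. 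The committee $W=\set{T_1,T_2}$ has score $4$, every committee obtained from $W$ by a single swap also has score exactly $4$, yet $\set{S_1,S_2}$ has score $6$. Your procedure would therefore certify $W$ as a necessary committee even though it is not even a possible one. The underlying reason is that swap-local optimality does not imply global optimality for coverage-type objectives such as the CC score.

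The fix is simple and is what the paper does: there is no need to restrict attention to single swaps, because for fixed $k$ there are only $\binom{m}{k}=O(m^k)$ candidate committees $W'$, so you can afford to iterate over all of them and, for each, maximize $s(\A,W')-s(\A,W)$ over completions using your per-voter decomposition. The relevant small set then becomes $\middle(v)\cap(W\cup W')$, of size at most $2k$, and the rest of your argument---checking feasibility of an intersection pattern under the poset constraints via upward closures, and sweeping over the achievable values of $|A(v)|$ via prefixes of a topological order of the remaining freely approvable candidates---goes through essentially unchanged and coincides with the paper's lemma.
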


In order to prove \Cref{thm:posetNeccomkThiele}, observe that a subset $W \subseteq C$ of size $k$ is \emph{not} a necessary committee if and only if there exists a completion where another size-$k$ subset has a score strictly greater than the score of $W$. Formally, let $\P$ be a partial profile 
and let $\Ck = \set{W' \subseteq C : |W'| = k}$ be the set of subsets of size $k$. For a completion $\A$ of $\P$ and $W, W' \in \Ck$, define the \textit{score difference} as
\[\D_\A(W, W') = \tsc(\A, W') - \tsc(\A, W) \text.\] 
When $\A$ consists of a single approval set $A(v)$ we write $\D_{A(v)}(W, W')$ instead of $\D_\A(W, W')$.

Let $\comp(\P)$ be the set of completions of $\P$. Again, when the profile consists of a single voter $v$, we write $\comp(P(v))$ instead of $\comp(\P)$.  Define the \emph{maximal score difference} as
\[
    \MD_\P(W) = \max_{\A \in \comp(\P), W' \in \Ck} \D_\A(W, W').
\]
Observe that $W$ is \e{not} a necessary committee if and only if $\MD_\P(W) > 0$. To compute $\MD_\P(W)$, we iterate over all sets $W' \in \Ck$, and for every $W'$ we compute 
\[ \MD_\P(W, W') = \max_{\A \in \comp(\P)} \D_\A(W, W').\] 
Once we compute $\MD_\P(W, W')$ for every $k$-set of candidates $W' \in \Ck$ we are done, since $\MD_\P(W) = \allowbreak \max_{W' \in \Ck} \MD_\P(W, W')$.

Now, let $W' \in \Ck$. Since $\rrule$ is an ABC scoring rule, the score of $W'$ in a completion $\A$ is $\tsc(\A, W') = \sum_{v \in V} \tsc(A(v), W')$, therefore 
\begin{align*}
    & \MD_\P(W, W') = \max_{\A \in \comp(\P)} \sum_{v \in V} \D_{A(v)}(W, W') 
    = \sum_{v \in V} \max_{A(v) \in \comp(P(v))} \D_{A(v)}(W, W') 
    =  \sum_{v \in V} \MD_{P(v)}(W, W') \,.
\end{align*}

The following lemma completes the proof of \Cref{thm:posetNeccomkThiele} by showing that $\MD_{P(v)}(W, W')$ can be computed in polynomial time for each voter $v \in V$.

\begin{lemma}
Let $\rrule$ be an ABC scoring rule and $k$ a fixed natural number. Computing $\MD_{P(v)}(W, W')$ is possible in polynomial time given a partial vote $P(v)$ over a set $C$ of candidates, and a pair of sets $W, W' \in \Ck$.
\end{lemma}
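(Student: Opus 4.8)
The plan is to solve the problem for a single partial vote by exploiting the fact that, for an ABC scoring rule, the contribution of $v$ to the score difference $\D_{A(v)}(W,W')$ is determined by very little information about the completion. Writing $U \eqdef \middle_\A(v)$ for the (order-filter) subset of $\middle(v)$ that $v$ approves in a completion $\A$, so that $A(v) = \top(v) \cup U$, the difference $\D_{A(v)}(W,W') = f(\abs{\top(v) \cap W'} + \abs{U \cap W'},\, \abs{\top(v)} + \abs{U}) - f(\abs{\top(v) \cap W} + \abs{U \cap W},\, \abs{\top(v)} + \abs{U})$ depends on $U$ only through the triple $(\abs{U \cap W}, \abs{U \cap W'}, \abs{U})$. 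First I would therefore argue that it suffices to determine, over all filters $U$ of $(\middle(v), \succ_{v})$, which such triples are realizable, and to evaluate $f$ on each; the maximum of the resulting finitely many values is $\MD_{P(v)}(W,W')$.

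Second, I would observe that since $\abs{W} = \abs{W'} = k$, only the \emph{relevant} middle candidates $\mathrm{Rel} \eqdef (W \cup W') \cap \middle(v)$ --- of which there are at most $2k$ --- influence the first two coordinates of the triple. Hence I would enumerate all subsets $R \subseteq \mathrm{Rel}$ of relevant candidates that $U$ might contain; there are at most $2^{2k}$ of them, a constant for fixed $k$. For each such $R$ the values $b \eqdef \abs{R \cap W}$ and $b' \eqdef \abs{R \cap W'}$ are fixed, and the only remaining freedom is the total size $a \eqdef \abs{U}$, so the score difference becomes a function of $a$ alone.

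Third --- and this is the crux --- I would characterize, for a fixed $R$, exactly which total sizes $a$ arise from filters $U$ with $U \cap \mathrm{Rel} = R$. A filter $U$ must contain the up-closure $\hat R$ of $R$ and must avoid the down-closure $D$ of the excluded relevant candidates $\mathrm{Rel} \setminus R$; the choice $R$ is realizable iff $R$ is upward closed within $\mathrm{Rel}$ and $\hat R \cap D = \emptyset$ (both checkable in polynomial time). When feasible, the admissible filters are precisely the sets $\hat R \cup U'$, where $U'$ is an arbitrary filter of the induced sub-poset $Q \eqdef (\middle(v) \setminus D) \setminus \hat R$ of \emph{free} (necessarily irrelevant) candidates. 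The key combinatorial fact I would prove is that the filters of any finite poset realize every size in $\set{0, 1, \dots, \abs{Q}}$: starting from a filter of size $j < \abs{Q}$, adding a maximal element of $Q \setminus U'$ keeps it a filter and increases its size by one. Consequently $a = \abs{\hat R} + \abs{U'}$ ranges over the contiguous interval $\set{\abs{\hat R}, \dots, \abs{\hat R} + \abs{Q}}$.

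Finally, for each feasible $R$ I would iterate over the polynomially many values of $a$ in this interval, evaluate the score difference $f(\abs{\top(v) \cap W'} + b',\, \abs{\top(v)} + a) - f(\abs{\top(v) \cap W} + b,\, \abs{\top(v)} + a)$ (using that $f$ is computable in polynomial time, and noting that $f$ need not be monotone in its second argument, so we cannot simply pick an endpoint), and keep the maximum over all $R$ and all $a$. Since there are $O(1)$ choices of $R$, $O(m)$ values of $a$, and each closure computation and evaluation is polynomial, the whole procedure runs in polynomial time. I expect the main obstacle to be the third step: correctly reducing the exponential space of completions to a polynomial enumeration hinges on the forced-inclusion/forced-exclusion closure argument together with the contiguous-range property of filter sizes, and on verifying that the sandwiched filters are exactly those with $U \cap \mathrm{Rel} = R$.
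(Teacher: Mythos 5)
Your proposal is correct and follows essentially the same route as the paper's proof: enumerate the at most $2^{2k}$ possible intersections $R$ of the approved middle candidates with $(W\cup W')\cap \middle(v)$, check feasibility via the up-closure of $R$ and the down-closure of the excluded relevant candidates, and observe that for a scoring rule the only remaining degree of freedom is $|A(v)|$, i.e., how many of the remaining ``free'' candidates are approved. The paper realizes every such cardinality by taking prefixes of a topological sort of the free candidates, which is the same device as your add-a-maximal-element argument for the contiguity of filter sizes.
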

\begin{proof}
Set $S = \middle(v) \cap (W \cup W')$. We call a subset $R \subseteq S$ \emph{feasible} if there exists a completion $A(v)$ of $P(v)$ where $\middle_\A(v) \cap S = R$. For feasible $R \subseteq S$, set
\begin{align*}
    \MD_{P(v)}^R(W, W') = \max_{A(v) \in \comp(P(v)) :\middle_A(v) \cap S = R} \D_{A(v)}(W, W') \,.
\end{align*}
Otherwise, define $\MD_{P(v)}^R(W, W') = -\infty$. With that we obtain 
\[\MD_{P(v)}(W, W') = \max_{R \subseteq S} \MD_{P(v)}^R(W, W').\] 
Since $|S| \leq 2k$, there is a constant number of subsets of $S$. 
In the remainder of the proof we show that we can compute $\MD_{P(v)}^R(W, W')$ in polynomial time for every $R \subseteq S$.

Let $R \subseteq S$, and let $R' \subseteq \middle(v)$ be the candidates that voter $v$ necessarily approves whenever it approves the candidates of $R$. Formally, $c \in R'$ if $c \in \middle(v)$ and there exists a candidate $c^* \in R$ such that $c \succeq_v c^*$. Note that $R \subseteq R'$. If $R' \cap (S \setminus R) \neq \emptyset$ then there exists a candidate in $S \setminus R$ that voter $v$ must approve when it approves the candidates of $R$. In this case, $R$ is not feasible, and $\MD_{P(v)}^R(W, W') = -\infty$. Otherwise, every candidate in $R'$ is either in $R$ or not in $S$, and $R$ is feasible. We now focus on completions where $\middle_A(v) \cap S = R$.

Let $T$ be the set of candidates $c \in \middle(v) \setminus (S \cup R')$ such that there is no $c' \in S \setminus R$ for which $c' \succeq_v c$. These are the remaining candidates that voter $v$ can approve without approving additional candidates from $S$. Note that the approval of every candidate outside of $T$ is already decided, since we focus on completions where $\middle_A(v) \cap S = R$.

Recall that $\rrule$ is a scoring rule, hence the scores $\tsc(A(v), W), \tsc(A(v), W')$ depend only on $|A(v) \cap W|, |A(v) \cap W'|$ and $|A(v)|$. Since $S = \middle(v) \cap (W \cup W')$, we already determined $|A(v) \cap W|$ and $|A(v) \cap W'|$. To determine $|A(v)|$ we only need to decide how many candidates we approve from $T$. We can deduce that among the completions where $\middle_A(v) \cap S = R$, only the number of candidates that voter $v$ approves from $T$ can affect the scores of $W$ and $W'$.

Let $c_1, \dots, c_t$ be a topological sorting of the candidates of $T$ \wrt $\succeq_v$ (\ie for every pair $c_i \neq c_j \in T$, if $c_i \succeq_v c_j$ then $i < j$). For every $i \in [t]$ define a completion $A_i(v) = \top(v) \cup R' \cup \set{c_1, \dots, c_i}$. It is easy to verify that for every $i \in [t]$, $A_i(v)$ is indeed a completion of $P(v)$ and it satisfies $\middle_{A_i}(v) \cap S = R$. Finally, by the observation regarding $\rrule$ and the scores $\tsc(A(v), W), \tsc(A(v), W')$, we get that
\begin{align*}
    \MD_{P(v)}^R(W, W') &= \max_{A(v) \in \comp(P(v)) : \middle_A(v) \cap S = R} \D_{A(v)}(W, W') = \max_{i \in [t]} \D_{A_i(v)}(W, W').
\end{align*}
We can deduce that we can compute $\MD_{P(v)}^R(W, W')$ in polynomial time, by computing the score of $W, W'$ in $A_1(v), \dots, A_t(v)$, which completes the proof.
\end{proof}

\subsection{Necessary Committee Members}
A candidate $c$ is a \e{necessary committee member} if for every completion  $\A$ of $\P$ there exists a winning committee $W$ for which $c \in W$. For an ABC rule $r$ and a committee size $k$, we consider the computational problem $\necmempar{k}$, where the input consists of a partial profile $\P$ and a candidate $c$, and the goal is to determine whether $c$ is a necessary committee member.

Note that a necessary committee member can be a member of different winning committees in different completions. Thus, in contrast to $\poscom$ and $\posmem$, being able to solve $\neccom$ does not help deciding $\necmem$, because a candidate could be a necessary committee member without being a member of a necessary committee. Furthermore, a candidate could be a necessary committee member even if a necessary committee does not exist.

We show that in the 3VA model we can  find the necessary members in polynomial time under approval voting.

\begin{theorem}
\label{thm:simpleNecmemkApproval}
In the 3VA model, $\necmempar{k}$ is solvable in polynomial time under Approval Voting, for all fixed $k \geq 1$.
\end{theorem}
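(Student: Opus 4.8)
The plan is to reduce the question to a single, explicitly computable completion that is worst possible for the candidate $c$. First I would record the characterization of winning committees under Approval Voting: since $w(x)=x$, the score of a committee decomposes as $s(\A,W)=\sum_{d\in W}s(\A,\set{d})$, so a winning committee is just a size-$k$ set of candidates with the $k$ largest approval scores (ties at the threshold broken arbitrarily). Consequently, $c$ belongs to some winning committee of a completion $\A$ if and only if fewer than $k$ candidates have strictly larger approval score than $c$ in $\A$. Hence $c$ fails to be a necessary committee member precisely when there is a completion in which at least $k$ candidates strictly beat $c$.

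Next I would exploit the defining feature of the 3VA model: for a fixed voter $v$, the approval decisions on the distinct candidates of $\middle(v)$ are mutually independent, since any subset of $\middle(v)$ yields a valid completion. This lets me build a single completion $\A^\star$ that is simultaneously worst for $c$ and best for everyone else: in every voter $v$, leave $c$ unapproved whenever $c\in\middle(v)$, and approve every other candidate of $\middle(v)$. Writing $\mu_c=\abs{\set{v\in V:c\in\top(v)}}$ and $M_d=\abs{\set{v\in V:d\in\top(v)\cup\middle(v)}}$, the completion $\A^\star$ gives $c$ its minimum possible score $\mu_c$ and each $d\neq c$ its maximum possible score $M_d$, all at once.

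I would then prove that $c$ is a necessary committee member if and only if fewer than $k$ candidates $d\neq c$ satisfy $M_d>\mu_c$. One direction is immediate: if at least $k$ such $d$ exist, then in $\A^\star$ these $d$ all strictly beat $c$, so $c$ lies in no winning committee of $\A^\star$, witnessing that $c$ is not necessary. For the converse, suppose some completion $\A'$ witnesses that $c$ is not necessary, so at least $k$ candidates $d$ have $s(\A',\set{d})>s(\A',\set{c})$. Monotonicity of the extremal bounds yields $M_d\ge s(\A',\set{d})>s(\A',\set{c})\ge\mu_c$ for each such $d$, so at least $k$ candidates satisfy $M_d>\mu_c$. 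This inequality chain is the crux of the argument.

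Finally, the algorithm is simply to compute $\mu_c$ and all $M_d$ by a single scan of the partial profile, count the candidates $d\neq c$ with $M_d>\mu_c$, and answer ``necessary'' exactly when this count is below $k$; this is clearly polynomial. I expect the only genuinely delicate point---and the step I would be most careful about---to be the justification that the simultaneous minimization of $c$ and maximization of all other candidates yields a single valid completion (this is exactly where 3VA is used and where the linear and poset models would break down), together with tracking strict-versus-weak inequalities in the monotonicity chain; the AV winner characterization and the running-time bound are routine.
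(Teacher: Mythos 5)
Your proof is correct, but it takes a genuinely different route from the paper's. The paper reduces $\necmempar{k}$ to the question of whether some size-$k$ set $W\not\ni c$ ``defeats'' $c$ (i.e., strictly outscores every committee containing $c$) in some completion; it then enumerates all $O(m^k)$ such sets $W$ and, for each, builds a $W$-specific adversarial completion with $\middle_\A(v)=\middle(v)\cap W$, reusing the swap argument from the possible-committee case. You instead exploit the separability of AV at the level of individual candidates: $c$ lies in some winning committee iff fewer than $k$ candidates strictly outscore it, and in the 3VA model the approval decisions within $\middle(v)$ are independent, so a \emph{single} completion $\A^\star$ simultaneously realizes the minimum score $\mu_c$ for $c$ and the maximum score $M_d$ for every $d\neq c$. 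Your inequality chain $M_d\ge s(\A',\set{d})>s(\A',\set{c})\ge\mu_c$ correctly handles the converse, and the tie-handling in the AV winner characterization is sound. What your approach buys is a simpler and faster algorithm (a single scan, with no dependence of the running time on $k$ beyond the final comparison); what it gives up is generality --- it leans on the fact that AV scores decompose candidate-by-candidate, whereas the paper's ``defeats'' framework transfers essentially unchanged to binary Thiele rules in the linear model (Theorem~\ref{thm:linearNecmemkThiele1Weight}), where no per-candidate score exists. You correctly identify the one place where 3VA is essential: the simultaneous extremal completion $\A^\star$ would not be a valid completion under the linear or poset models.
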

\begin{proof}
Let $\P = (P(1), \dots, P(n))$ be a partial profile over a set $C$ of candidates, and let $c$ be a candidate. For a subset $W \subseteq C \setminus \set{c}$ of size $k$ and a completion $\A$, we say that $W$ \e{defeats} $c$ in $\A$ if $s(\A, W) > s(\A, W')$ for every size-$k$ subset $W' \subseteq C$ which includes $c$. Observe that $c$ is \e{not} a necessary committee member if and only if there exists a set $W \subseteq C \setminus \set{c}$ and a completion $\A$ where $W$ defeats $c$. For every $W$ of size $k$ (which does not contain $c$), we search for a completion where $W$ defeats $c$.

Given a set $W$, define a completion $\A = (A(1), \dots, A(n))$ where for every voter $v$, $\middle_\A(v) = \middle(v) \cap W$. We show that $W$ defeats $c$ in $\A$ if and only if there exists a completion where $W$ defeats $c$. Assume there exists a completion $\A'$ where $W$ defeats $c$, we show that this property also holds in $\A$.
As in the proof of \Cref{thm:simplePoscomkApproval}, we can perform changes that transform $\A'$ to $\A$, while maintaining the property that $W$ defeats $c$. We can deduce that $W$ defeats $c$ in $\A$.
\end{proof}

In the linear model, $\necmempar{k}$ is solvable in polynomial time under AV and under binary Thiele rules.

\begin{theorem}
\label{thm:linearNecmemkApproval}
In the linear model, $\necmempar{k}$ is solvable in polynomial time under Approval Voting, for all fixed \mbox{$k \geq 1$}.
\end{theorem}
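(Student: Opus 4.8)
The plan is to reduce the necessary-membership test to the evaluation of a single, carefully chosen completion. First I would record the standard AV characterization: in a completion $\A$, candidate $c$ lies in some winning committee if and only if fewer than $k$ candidates have strictly higher approval score than $c$; equivalently, $c$ fails to be a committee member exactly when at least $k$ candidates $d$ satisfy $s(\A,d) > s(\A,c)$. Hence $c$ is \emph{not} a necessary committee member if and only if there is a completion in which at least $k$ candidates strictly beat $c$. Deciding the problem therefore amounts to maximizing, over all completions, the number of candidates that strictly beat $c$.

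The key idea is that a single completion $\A_c$, depending only on $c$, simultaneously maximizes every gap $s(\A,d)-s(\A,c)$ over all $d \neq c$. I would construct $\A_c$ ``dually'' to the completion used in Theorem~\ref{thm:linearPosmemkApproval}: for each voter $v$ with $c \in \top(v) \cup \bottom(v)$, let $\middle_\A(v)=\middle(v)$ (the full prefix); for each voter $v$ with $c \in \middle(v)$, let $\middle_\A(v)$ be the longest prefix of the linear order that stops just before $c$, so that $c$ is left unapproved while every candidate preceding $c$ is approved. Intuitively, this approves as many other candidates as possible while keeping $c$'s approval score as low as the prefix structure permits.

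The heart of the argument is a per-voter domination claim: for every fixed $d \neq c$, the choice prescribed by $\A_c$ maximizes the voter's contribution $s(A(v),d)-s(A(v),c)$ to the gap $s(\A,d)-s(\A,c)$. When $c \notin \middle(v)$ this is immediate, since $c$'s contribution is fixed and the full prefix approves $d$ whenever at all possible. The main obstacle is the case $c \in \middle(v)$, where approving a longer prefix can boost $d$ but may force the approval of $c$ (namely when $c$ precedes $d$). Here I would compare the prefix-before-$c$ against every cutoff that includes $c$, checking all positions of $d$ relative to $c$: if $d$ precedes $c$ or $d\in\top(v)$, the prefix-before-$c$ already yields contribution $+1$; if $d$ follows $c$ in $\middle(v)$ or $d\in\bottom(v)$, then any cutoff reaching $d$ must also include $c$, so the best attainable contribution is $0$, again matched by $\A_c$. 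Since this optimality holds for every $d$ at once, $\A_c$ maximizes all gaps simultaneously.

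Finally I would combine these facts. If some completion $\A'$ has $k$ candidates beating $c$, then each such $d$ has a positive gap in $\A'$ and hence an at least as large, still positive, gap in $\A_c$; so those same candidates beat $c$ in $\A_c$, proving that $\A_c$ realizes the maximum number of candidates strictly beating $c$. The algorithm then builds $\A_c$, computes all approval scores, and reports that $c$ is a necessary committee member if and only if fewer than $k$ candidates exceed $s(\A_c,c)$; this clearly runs in polynomial time. Note that, in contrast to the 3VA proof of Theorem~\ref{thm:simpleNecmemkApproval}, no iteration over defeating sets $W$ is required, since one completion works uniformly.
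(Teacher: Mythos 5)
Your proposal is correct and follows essentially the same route as the paper: it builds the identical completion (approve all of $\middle(v)$ when $c \in \top(v) \cup \bottom(v)$, otherwise the prefix stopping just before $c$) and verifies by the same per-voter case analysis that the gap $s(\A,d)-s(\A,c)$ can only grow when passing to this completion, so it is the worst case for $c$. The only cosmetic difference is that you phrase the argument as simultaneous gap maximization while the paper states the per-voter score inequalities directly; the content is the same.
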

\begin{proof}
Given a partial profile $\P = (P(1), \dots, P(n))$ and a candidate $c$, we construct a completion $\A$ of $\P$ as follows. For every voter $v$, we have a partition $\top(v), \middle(v), \bottom(v)$ and a linear order $c_{i_1} \succ \dots \succ c_{i_q}$ over the candidates in $\middle(v)$. If $c \in \top(v)$ or $c \in \bottom(v)$ define $\middle_\A(v) = \middle(v)$. Otherwise, $c \in \middle(A)$, let $j$ be the rank of $c$ in the linear order $c_{i_1} \succ \dots \succ c_{i_q}$, i.e., $c_{i_j} = c$. Define $\middle_\A(v) = \set{c_{i_1}, \dots, c_{i_{j-1}}}$. We show that $c$ is \e{not} a necessary committee member if and only if $c$ is not a member of a committee in $\A$.

Let $\A'$ be a completion where $c$ is not a committee member, i.e., there are at least $k$ candidates with a score greater than $c$ in $\A'$. (We again use the definition of AV using scores of individual candidates, as in the proof of \Cref{thm:linearPosmemkApproval}.) Let $v$ be a voter, consider three cases. First, if $c \in \top(v) \cup \bottom(v)$, then $s(A(v), c) = s(A'(v), c)$ and $s(A(v), d) \geq s(A'(v), d)$ for every other candidate $d \in C$. Second, if $c \in \middle(v)$ and $c \in A'(v)$, then $s(A(v), c) = s(A'(v), c) - 1$ and $s(A(v), d) \geq s(A'(v), d) - 1$ for every other candidate $d$.  Finally, if $c \in \middle(v)$ and $c \notin A'(v)$, then $s(A(v), c) = s(A'(v), c)$ and $s(A(v), d) \geq s(A'(v), d)$ for every other candidate $d$. We can deduce that in $\A$ there are also at least $k$ candidates with a score greater than $c$, therefore $c$ is not a committee member in $\A$.
\end{proof}

\begin{theorem}
\label{thm:linearNecmemkBinary}
In the linear model, $\necmempar{k}$ is solvable in polynomial time under Approval Voting and under every binary Thiele rule, for all fixed \mbox{$k \geq 1$}.
\end{theorem}
\begin{proof}
Let $\P = (P(1), \dots, P(n))$ be a partial profile over a set $C$ of candidates, and let $c$ be a candidate. As in the proof of \Cref{thm:simpleNecmemkApproval}, for every $W \subseteq C$ of size $k$ (which does not contain $c$), we search for a completion where $W$ defeats $c$.

Recall that there exists an integer $t$ such that $w(x) = 0$ for every $x < t$ and $w(x) = 1$ for every $x \geq t$. Given a size-$k$ set of candidates $W$, we construct a completion $\A$ in a similar way to the proof of Theorem~\ref{thm:linearPoscomkThiele1Weight}. Let $v$ be a voter, we have three sets $\top(v), \middle(v), \bottom(v)$ and a linear order $c_{i_1} \succ \dots \succ c_{i_q}$ over the candidates in $\middle(v)$. If $|\top(v) \cap W| \geq t$ or $|(\top(v) \cup \middle(v)) \cap W| < t$, then define $\middle_\A(v) = \emptyset$. Otherwise, $|\top(v) \cap W| < t$ and $|(\top(v) \cup \middle(v)) \cap W| \geq t$. Let $j$ be the minimal index such that $|(\top(v) \cup \set{c_{i_1}, \dots, c_{i_j}}) \cap W| \geq t$, define $\middle_\A(v) = \set{c_{i_1}, \dots, c_{i_j}}$.

We show that $W$ defeats $c$ in $\A$ if and only if there exists a completion where $W$ defeats $c$. Assume that $W$ defeats $c$ in some completion $\A'$. As in the proof of \Cref{thm:linearPoscomkThiele1Weight}, we can perform changes that transform $\A'$ to $\A$, while maintaining the property that $W$ defeats $c$. We can deduce that $W$ defeats $c$ in $\A$.
\end{proof}

\section{Representation under Incomplete Information}
\label{sec:representation}

An important goal in committee voting concerns the proportional representation of voters.
In this section, we investigate the problem of representation under incomplete preference information: Given a committee $W$ and partial profile $\P$, does $W$ possibly, or necessarily, provide proportional representation \wrt a completion of $\P$? We formalize (proportional) representation via the well-known ``justified representation'' axioms. In particular, we consider the basic axiom known as \emph{justified representation} (\emph{JR})~\cite{ABC+16a}, the related stronger representation axioms \emph{proportional justified representation} (\emph{PJR})~\cite{SFF+17a} and \emph{extended justified representation} (\emph{EJR})~\cite{ABC+16a}, and their recently introduced robust variations PJR+ and EJR+~\cite{BrPe23a}. 
These axioms share the same idea: every group of voters that is sufficiently large and cohesive should be adequately represented in the winning committee. The differences between the axioms are the exact requirements of size, cohesiveness and representation for these sets.

The axioms JR, PJR+, and EJR+ can be verified efficiently even if the committee size $k$ is part of the input \citep{ABC+16a,BrPe23a}. Therefore, this section focuses on these three axioms, and does not make the assumption that $k$ is fixed. The axioms PJR and EJR, for which the verification problem is intractable when $k$ is part of the input \citep{ABC+16a,AEH+18a}, are discussed in \Cref{sec:pjr-ejr}.

\subsection{Justified Representation}\label{sec:jr}
For a (complete) approval profile $\A$ and committee size~$k$, a committee $W\subseteq C$ of size $|W|=k$ satisfies \textit{justified representation (JR)} \wrt $\A$ if for all voter groups $G \subseteq V$ with $|G|\geq n/k$ and $\bigcap_{v \in G} A(v)\neq \emptyset$, it holds that $W \cap \bigcup_{v\in G} A(v) \neq \emptyset$.
Using the structure described before, in this section we call a group of voters $G \subseteq V$ \emph{large} if $|G|\geq n/k$, \emph{cohesive} if $\bigcap_{v \in G} A(v)\neq \emptyset$, and we say that $G$ is \emph{represented} by $W$ if $W \cap \bigcup_{v\in G} A(v) \neq \emptyset$.
\citet{ABC+16a} show that it can be tested in polynomial time whether a given committee provides JR \wrt a given (complete) approval profile.

Given a partial profile $\P$ and a committee $W$, we consider two decision problems.
In $\posjr$, we ask whether there exists a completion $\A$ of $\P$ where $W$ satisfies JR. 
In $\necjr$ we ask whether $W$ satisfies JR \wrt every completion of $\P$.
In contrast to the decision problems we considered earlier, the committee size $k$ is part of the input for both problems, that is, we do \emph{not} assume by default that $k$ is fixed.

\begin{table}[t]
    \centering
    \scalebox{1}{
        \begin{tabular}{cccc}
            \toprule
            voter $v$ & $\top(v)$ & $\middle(v)$ & $\bottom(v)$ \\
            \midrule
            1 & $a, b$ & $d \rightarrow c$ & $e$ \\
            2 & $a$ & $d \rightarrow b \rightarrow c$ & $e$ \\
            3 & $a, b$ & $c \rightarrow d \rightarrow e$ & -- \\
            4 & $a$ & $d \rightarrow b$ & $c,e$ \\
            5 & $c, e$ & -- & $a,b,d$ \\
            6 & $e$ & $c \rightarrow d \rightarrow a$ & $b$ \\
            \bottomrule
        \end{tabular}
    }
    \caption{The incomplete approval profile from \Cref{ex:posJR_necJR,ex:posEJR_necEJR}.}
    \label{tbl:example_posJR_necJR}
\end{table}

\begin{example}\label{ex:posJR_necJR}
    Consider the instance depicted in \Cref{tbl:example_posJR_necJR} using the model of linear incomplete approvals. There are 6 voters with incomplete approval preferences over 6 candidates, and we assume that the committee size is $k = 3$. 
    The committee $W=\set{b,d,e}$ does not necessarily satisfy JR: In the completion where all voters only approve candidates in $\top(v)$, voters 2 and 4 form a large and cohesive group that is not represented. 
    However, $W$ possibly satisfies JR: When voter~$4$ approves of $a$ and $d$ and all other voters only approve candidates in $\top(v)$, the group $\{2,4\}$ is represented, as are all other large and cohesive groups.   
    Furthermore, the committee $\set{a,c,e}$ necessarily satisfies JR, since in every possible completion every voter is represented at least once.
\end{example}

We first prove that a committee satisfying JR \wrt an approval profile $\A$ also satisfies JR \wrt a new approval profile $\A'$ that results from modifying $\A$ in certain ways. This allows us to then devise an efficient algorithm to solve $\posjr$ and $\necjr$. Note that the same kind of ``monotonicity'' is not true for the axioms we discuss in \Cref{sec:proportional-extended}. 

\begin{lemma}
\label{lem:JRchangeprofile}
Let $W \subseteq C$ be a committee satisfying JR \wrt an approval profile $\A$. Then, $W$ satisfies JR \wrt a modified approval profile $\A'$ if $\A'$ is constructed in one of the following two ways:
\begin{itemize}
    \item[(i)] a voter $v^* \in V$ stops to approve a candidate not in $W$, \ie $A'(v^*) = A(v^*) \setminus \set{c}$ for some $c \notin W$ and $A'(v) = A(v)$ for all $v \neq v^*$; or 
    \item[(ii)] a voter $v^* \in V$ changes her approval set such that the modified approval set contains a candidate in $W$, \ie $A'(v^*) \cap W \neq \emptyset$ and $A'(v) = A(v)$ for all $v \neq v^*$.
\end{itemize}
\end{lemma}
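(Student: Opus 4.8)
The plan is to verify the JR condition directly for $\A'$ in each case, using the fact that JR holds for $\A$. Recall that $W$ satisfies JR with respect to a profile if every ``cohesive'' group---a set $G$ of at least $n/k$ voters sharing a commonly approved candidate---has at least one of its approved candidates in $W$. So I would take an arbitrary cohesive group $G$ with respect to $\A'$ (that is, $|G|\geq n/k$ and $\bigcap_{v\in G}A'(v)\neq\emptyset$) and show $W\cap\bigcup_{v\in G}A'(v)\neq\emptyset$. The group sizes and the threshold $n/k$ are unaffected since the voter set $V$ is unchanged in both modifications, so the only thing to track is how the approval sets move.

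For case (ii), the argument is immediate and does not even need the JR hypothesis on $\A$: if the modified voter $v$ lies in $G$, then since $A'(v)\cap W\neq\emptyset$ we already have $W\cap\bigcup_{u\in G}A'(u)\neq\emptyset$; and if $v\notin G$, then $A'(u)=A(u)$ for every $u\in G$, so $G$ is cohesive with respect to $\A$ as well, and JR for $\A$ gives a witness candidate in $W\cap\bigcup_{u\in G}A(u)=W\cap\bigcup_{u\in G}A'(u)$.

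For case (i), I would argue that any group that is cohesive with respect to $\A'$ is also cohesive with respect to $\A$, and that its covered-by-$W$ status is preserved. Concretely, since $A'(u)\subseteq A(u)$ for all $u$ (only voter $v$ loses the single candidate $c$, and all others are unchanged), we have $\bigcap_{u\in G}A'(u)\subseteq\bigcap_{u\in G}A(u)$; hence if the left side is nonempty so is the right, meaning $G$ is cohesive for $\A$ too. By JR for $\A$ there is some $c^*\in W\cap\bigcup_{u\in G}A(u)$. The key point is that the only candidate removed in passing from $\A$ to $\A'$ is $c\notin W$, so $c^*\neq c$ and $c^*$ is still approved by the same voter under $\A'$; thus $c^*\in W\cap\bigcup_{u\in G}A'(u)$, establishing JR for $\A'$.

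I do not expect a genuine obstacle here, as both cases reduce to elementary set containments once the threshold $n/k$ is observed to be invariant. The one subtlety worth stating carefully is the role of the ``$c\notin W$'' assumption in case (i): it is precisely what guarantees that the JR-witness $c^*$ found for $\A$ survives the removal, since removing a committee member could in principle destroy the only witness. I would make sure to emphasize that the witness from $\A$ is never the removed candidate, which is the crux that makes the containment argument go through.
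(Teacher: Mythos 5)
Your proof is correct and follows essentially the same route as the paper's: in both cases one checks that any group that is cohesive under $\A'$ was already cohesive under $\A$ (or, in case (ii), is trivially represented when it contains the modified voter), and that the JR witness survives the modification because the only candidate ever removed lies outside $W$. Your treatment of case (i) is in fact a touch cleaner, as it avoids the paper's explicit case split on whether the modified voter belongs to the group.
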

\begin{proof}
    To prove the two claims, assume that committee $W$ satisfies JR \wrt $\A$ and let $v^* \in V$ be the voter who changes her ballot when switching from $\A$ to $\A'$. For \textit{(i)}, we first consider all large groups of voters $G \subseteq V$ that include $v^*$. If $G$ is large and cohesive under $\A$, then we know that $W \cap (\bigcup_{v \in G} A(v)) \neq \emptyset$, \ie that $G$ is represented by $W$ under $\A$. The same is true under $\A'$ as $A(v) \cap W$ did not change for any voter when switching from $\A$ to $\A'$.
    If $V$ was not cohesive under~$\A$, it cannot become cohesive under $\A'$ through deletion of a candidate from an approval set. Thus, $W$ satisfies JR under $\A'$.
    Now consider all large groups $G \subseteq V$ that do not include~$v^*$. For these subsets, nothing changes when switching from $\A$ to $\A'$. $G$ is cohesive under $\A$ if and only if it is cohesive under $\A'$. Since $W$ satisfies JR under $\A$, $G$ is represented by $W$ under $\A$ if it is cohesive. It follows directly that $G$ is also represented by $W$ under $\A'$ in this case.
    
   For \textit{(ii)}, we again start by considering all large voter groups $G \subseteq V$ that include $v^*$. Since $A'(v^*) \cap W \neq \emptyset$, we know that $G$ is represented by $W$ (whether $G$ is cohesive or not). 
    For voter groups not including $v^*$, as in \textit{(i)}, nothing changes when switching from $\A$ to $\A'$. In particular, any large and cohesive voter group not including $v^*$ is represented by $W$ under $\A'$ because it is represented by $W$ under $\A$.
\end{proof}

Using this lemma, we show tractability of $\posjr$ and $\necjr$ by proving that for each problem it is sufficient to check whether $W$ satisfies JR for one carefully chosen completion.

\begin{theorem}
\label{thm:posetPosNeccomkJR}
In the poset approval model, $\posjr$ and $\necjr$ are both solvable in polynomial time. 
\end{theorem}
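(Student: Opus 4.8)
The plan is to show, for each problem, that it suffices to test JR on a single canonical completion that can be built in polynomial time; since JR can be verified for a fixed profile in polynomial time \citep{ABC+16a}, both problems then lie in P. The guiding observation is that a group $G$ is \emph{unrepresented} by $W$ exactly when no voter in $G$ approves any candidate of $W$. Hence voters that approve some member of $W$ are harmless (every group containing such a voter is represented, regardless of the voter's other approvals), while only voters that approve no member of $W$ can form a \emph{dangerous} group, i.e. one that is relevant but unrepresented. The two constructions below build the ``most favorable'' completion for $\posjr$ and the ``most adversarial'' one for $\necjr$, and the correctness proofs transform an arbitrary completion into the canonical one (or vice versa) one voter at a time, invoking Lemma~\ref{lem:JRchangeprofile}(i) and (ii) to preserve JR at each step.

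For $\posjr$ I would construct the completion $\A^*$ that makes representation as easy as possible. Every voter $v$ with $(\top(v)\cup\middle(v))\cap W\neq\emptyset$ is made to approve some candidate of $W$: if $\top(v)\cap W\neq\emptyset$ take $\middle_{\A^*}(v)=\emptyset$, and otherwise pick $w\in\middle(v)\cap W$ and let $\middle_{\A^*}(v)=\set{c\in\middle(v): c\succ_{v}\dots\succ_{v} w}$, which is upward-closed and meets $W$. Every voter that can never approve a candidate of $W$ is given $\middle_{\A^*}(v)=\emptyset$, minimizing cohesion. To see $\A^*$ is optimal, take any completion $\A$ in which $W$ satisfies JR and rewrite it into $\A^*$ voter by voter: for a voter that approves a $W$-candidate in $\A^*$, the new ballot meets $W$, so Lemma~\ref{lem:JRchangeprofile}(ii) preserves JR; for a voter that can approve no $W$-candidate, passing to $\top(v)$ only deletes candidates outside $W$, so Lemma~\ref{lem:JRchangeprofile}(i) preserves JR. Thus $W$ satisfies JR in some completion iff it does in $\A^*$.

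For $\necjr$ I would dually construct the adversarial completion $\A_*$. Every voter $v$ with $\top(v)\cap W=\emptyset$ is made to approve the largest $W$-avoiding upward-closed set, namely $\middle(v)\setminus D_v$, where $D_v=\set{c\in\middle(v): w\succ_{v}\dots\succ_{v} c \text{ for some } w\in W\cap\middle(v)}$ is the down-set of $W\cap\middle(v)$; one checks that $\middle(v)\setminus D_v$ is upward-closed and disjoint from $W$. The remaining voters (those with $\top(v)\cap W\neq\emptyset$) are forced to represent every group containing them, so they are irrelevant to violations and may be set arbitrarily, say $\middle_{\A_*}(v)=\emptyset$. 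I then claim $W$ satisfies JR in every completion iff it does in $\A_*$, proved by transforming $\A_*$ into an arbitrary completion $\A$ voter by voter: if the target ballot $A(v)$ meets $W$ (in particular whenever $\top(v)\cap W\neq\emptyset$), Lemma~\ref{lem:JRchangeprofile}(ii) applies; otherwise $\middle_\A(v)$ is an upward-closed $W$-avoiding set, hence contained in $\middle(v)\setminus D_v$, so the change only deletes candidates outside $W$ and Lemma~\ref{lem:JRchangeprofile}(i) applies.

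I expect the main obstacle to be the $\necjr$ direction, specifically establishing that $\middle(v)\setminus D_v$ is the \emph{maximum} upward-closed subset of $\middle(v)$ disjoint from $W$ and that every $W$-avoiding completion of $v$ is contained in it; this containment is exactly what lets me reach an arbitrary completion from $\A_*$ using only deletions of candidates outside $W$ together with a single representing switch. Beyond that, the routine care is to verify that $\A^*$ and $\A_*$ genuinely respect each poset $\succ_{v}$ (i.e. that the prescribed approval sets are upward-closed), and that the voter-by-voter rewriting is a valid induction in which JR holds at every intermediate profile.
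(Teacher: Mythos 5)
Your proposal is correct and follows essentially the same strategy as the paper: build one canonical completion per problem (most favorable for $\posjr$, the maximum $W$-avoiding one for $\necjr$) and convert an arbitrary completion to/from it voter by voter via Lemma~\ref{lem:JRchangeprofile}(i) and (ii). The only differences are cosmetic — for $\posjr$ the paper approves all of $\middle(v)$ for voters with $\middle(v)\cap W\neq\emptyset$ rather than just the up-set of one $W$-candidate, and your explicit description of $S_v$ as $\middle(v)\setminus D_v$ makes concrete what the paper asserts as "the inclusion-wise largest" such set.
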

\begin{proof}
    We start with the proof for $\posjr$.
    Assume we are given a partial profile $\P = (P(1) ,\dots, P(n))$ and a committee $W$ of size $k$. We construct a completion $\A$ of $\P$ as follows. For each voter $v \in V$ with $\middle(v) \cap W \neq \emptyset$ we define $\middle_\A(v) = \middle(v)$. For all other voters $v$, we set $\middle_\A(v) = \emptyset$.
    We claim that there exists a completion of $\P$ in which $W$ satisfies JR if and only if $W$ satisfies JR in $\A$.
    
    If $W$ satisfies JR \wrt $\A$, we have a ``yes'' instance by definition. Now assume that $W$ satisfies JR in some completion $\A' = (A'(1), \ldots, A'(n))$. We argue why, in this case, $W$ also satisfies JR in $\A$. To this end, we change the approval ballots of the voters one by one from $\A'$ to $\A$ while keeping track of the compliance of $W$ with JR.
    For every voter $v \in V$ with $\middle(v) \cap W = \emptyset$, we can delete all voters in $\middle(v)$ from $A'(v)$ to obtain $A(v)$. By \Cref{lem:JRchangeprofile} (i), $W$ still satisfies JR on the resulting profile.
    For every voter $v \in V$ with $\middle(v) \cap W \neq \emptyset$ we can add all remaining candidates from $\middle(v)$ to $A'(v)$ to obtain $A(v)$. By \Cref{lem:JRchangeprofile} (ii), $W$ again still satisfies JR on the resulting profile.
    Changing the approval ballots of all voters like this will transform the profile $\A'$ to $\A$. Thus, if $W$ satisfies JR on $\A'$, it also satisfies JR on $\A$. As a result, it is sufficient to consider the completion $\A$ when checking if $W$ is a possible JR committee.

    We now consider the problem $\necjr$ and use a similar argument as above.
    Again, assume we are given a partial profile~$\P$ and a committee $W$. We construct a completion $\A$ of $\P$ as follows. For $v \in V$, let $S_v$ be the inclusion-maximal subset of candidates in $\middle(v)$ satisfying $S_v \cap W = \emptyset$ while not violating the poset ordering given over $\middle(v)$. We can construct $S_v$ efficiently by checking each candidate in $\middle(v)$ on its own: If the candidate can be approved without needing to approve a candidate from $W$, we add it to $S_v$, otherwise we don't. Then we set $\middle_\A(v) = S_v$ for every $v \in V$.
    We claim that $W$ satisfies JR \wrt every completion of $\P$ if and only if $W$ satisfies JR \wrt $\A$.
    
    If $W$ fails to satisfy JR on $\A$, we have a ``no'' instance by definition. Now assume that $W$ satisfies JR on $\A$. We prove that it then also satisfies JR for any other completion $\A' = (A'(1), \ldots, A'(n))$ of $\P$. To do this, we consider the approval set $A(v)$ of each voter $v \in V$ individually and keep track of the compliance of $W$ with JR while changing $A(v)$ to $A'(v)$.
    For every $v \in V$, we delete all candidates $A(v) \setminus A'(v)$ from $A(v)$. By construction, these are all candidates not in $W$ and thus by \Cref{lem:JRchangeprofile} (i) the committee $W$ still satisfies JR afterwards. 
    Then, we add all remaining candidates from $\middle_{\A'}(v)$ to the approval set of $v$. By the definition of $S_v$ and the poset order, this set is either empty or includes at least one $c \in W$ and thus, by \Cref{lem:JRchangeprofile} (ii), the committee $W$ still satisfies JR afterwards.
    Changing the approval ballots of all voters like this will transform the profile $\A$ to $\A'$. As a result, $W$ satisfies JR \wrt all completions of $\P$ if it satisfies JR on $\A$.
\end{proof}

\subsection{Stronger Representation Axioms}\label{sec:proportional-extended}

We now turn our attention to stronger representation axioms. These axioms require that cohesive sets of voters are represented proportionally to their size.
\citet{BrPe23a} introduced the axioms of PJR+ and EJR+. These axioms are based on PJR (proportional justified representation) and EJR (extended justified representation) that we discuss in \Cref{sec:pjr-ejr}. Let $\A$ be a (complete) approval profile and $W\subseteq C$ a committee of size $|W|=k$.
A group $G \subseteq V$ of voters is said to be \emph{$\ell$-large} for some $\ell \in \naturals$ if $|G| \geq \ell \cdot n / k$. In this section, we call $G$ \emph{cohesive} (\wrt $\A$ and $W$) if there exists a candidate $c \in C \setminus W$ such that $c \in \bigcap_{v \in G} A(v)$.
Note that, in contrast to JR, the requirement of cohesiveness depends on the committee $W$. 

Then, $W$ satisfies \emph{PJR+} \wrt $\A$ if for every integer $\ell \leq k$ and every voter  group $G \subseteq V$ that is $\ell$-large and cohesive, it holds that 
the committee $W$ contains at least $\ell$ candidates that are approved by some voter in $G$, i.e., 
\[ \left| W \cap \left( \bigcup_{v \in G} A(v) \right) \right| \geq \ell.\]
Further, $W$ satisfies \emph{EJR+} if for every $\ell \leq k$ and every voter group $G \subseteq V$ that is $\ell$-large and cohesive, there is at least one voter $v \in G$ for which $|W \cap A(v)| \geq \ell$.

Given a complete approval profile $\A$ and a committee $W$, it can be verified in polynomial time whether $W$ satisfies PJR+ and EJR+~\cite{BrPe23a}. Similarly to \Cref{sec:jr}, we study the problems of $\necpjrp$ and $\necejrp$, where given a committee $W$ and a partial approval profile $\P$, we wish to decide whether $W$ satisfies PJR+ or EJR+, respectively, in every completion of $\P$. We also consider $\pospjrp$ and $\posejrp$, where the goal is to decide whether $W$ satisfies PJR+ or EJR+, respectively, in at least one completion of $\P$.
As in \Cref{sec:jr}, the committee size $k$ is part of the input, and we do not assume it is fixed. 

As noted above, since the level of representation that these axioms demand scales with the cohesiveness and size of the voter sets, they do not exhibit the same ``monotonicity'' properties that JR does (see \Cref{lem:JRchangeprofile}). Nevertheless, we provide polynomial time algorithms for $\necpjrp$ and $\necejrp$ in the poset model, and for $\pospjrp$ and $\posejrp$ in the three-valued model.

\begin{theorem}\label{res:necpjrpejrp}
    In the poset approval model, $\necpjrp$ and $\necejrp$ are solvable in polynomial time.
\end{theorem}
\begin{proof}
    We start with $\necpjrp$. Consider a partial approval profile $\P$ and a committee $W$ of size $|W| = k$. For every candidate $c \in C \setminus W$, we define a completion $\A_{c}$ of $\P$ as follows. For every voter $v \in V$ with $c \in \middle(v)$, in $\A_{c}$ the voter approves $c$ and as few other candidates as possible. Voters $v$ with $c \not\in \middle(v)$ only approve $\top(v)$ in $\A_c$. 
    We claim that $W$ satisfies PJR+ for every completion of $\P$ if and only if $W$ satisfies PJR+ in $\A_c$ for all $c \in C \setminus W$.
    Since PJR+ can be verified efficiently~\cite{BrPe23a}, the latter can be checked in polynomial time.
    
    If $W$ does not satisfy PJR+ in $\A_c$ for some $c \in C \setminus W$, then we have a ``no'' instance by definition. 
    Now assume that there exists a completion $\A$ of $\P$ where $W$ does not satisfy PJR+. We will show that $W$ violates PJR+ in $\A_c$ for some $c \in C \setminus W$. Since $W$ violates PJR+ in $\A$, there exists an integer $\ell \leq k$, a voter group $G \subseteq V$, and a candidate $c \in C \setminus W$ such that $G$ is $\ell$-large, $c \in \bigcap_{v \in G} A(v)$, and $\left| W \cap \left( \bigcup_{v \in G} A(v) \right) \right| < \ell$.
    Consider the same voter group $G$ in the completion~$\A_c$. For every group member $v \in G$ we have $c \in A_c(v)$ and $A_c(v) \subseteq A(v)$, since $v$ approves $c$ in~$\A$. We can, therefore, deduce that $c \in \bigcap_{v \in G} A_c(v)$ and
    \[ \abs{W \cap \rpar{\bigcup_{v \in G} A_c(v)}} \leq \abs{W \cap \rpar{\bigcup_{v \in G} A(v)}} < \ell \,.\]
    This implies that $W$ does not satisfy PJR+ in $\A_c$, which completes the proof for $\necpjrp$.

    For $\necejrp$, the algorithm is the same, except that we check whether $W$ satisfies EJR+ in each $\A_c$ (instead of PJR+). The correctness follows from the same idea: If there exists a completion with an integer $\ell \leq k$ and a group of voters $G \subseteq V$ such that $G$ is $\ell$-large, cohesive, and $|W \cap A(v)| < \ell$ for every $v \in G$, then the same holds for $G$ in the completion $\A_c$.
\end{proof}

We now turn to the question whether a committee can possibly satisfy PJR+ or EJR+. For the 3VA model, we show that both problems are solvable in polynomial time.

\begin{theorem}\label{res:pospjrpejrp}
    In the 3VA model, $\pospjrp$ and $\posejrp$ are solvable in polynomial time.
\end{theorem}
\begin{proof}
    We start with $\pospjrp$, and use a similar algorithm to the proof of \Cref{thm:simplePoscomkApproval}. Given a partial profile $\P$ and a subset $W \subseteq C$ of size $k$, define a completion $\A$ where for every voter $v \in V$, $\middle_\A(v) = \middle(v) \cap W$. We show that $W$ satisfies PJR+ in at least one completion of $\P$ if and only if $W$ satisfies PJR+ in $\A$. 

    One direction is trivial. For the other direction,
    let $\A'$ be a completion where $W$ satisfies PJR+. We need to show that $W$ satisfies PJR+ \wrt $\A$.
    Let $G \subseteq V$ be an $\ell$-large group of voters that is cohesive in $\A$, \ie there exists $c \in C \setminus W$ such that $c \in \bigcap_{v \in G} A(v)$. By the construction of $\A$, we have $c \in \bigcap_{v \in G} \top(v)$, hence $G$ is also cohesive in $\A'$. The construction of $\A$ implies that $W \cap A'(v) \subseteq W \cap A(v)$ for all $v \in V$. 
    Together with the assumption that $W$ satisfies PJR+ in $\A'$, we can deduce that
    \[ \abs{W \cap \rpar{\bigcup_{v \in G} A(v)}} \geq \abs{W \cap \rpar{\bigcup_{v \in G} A'(v)}} \geq \ell \, \]
    which implies that $W$ satisfies PJR+ \wrt $\A$.

    For $\posejrp$ we use the same algorithm, except that we check whether $W$ satisfies EJR+ \wrt $\A$ (instead of PJR+). The correctness follows from the same argument that we used for $\pospjrp$.
\end{proof}

It is left open whether $\pospjrp$ and $\posejrp$ are solvable in polynomial time in the poset approval model, or even in the linear model.
The greedy approach in the proof of \Cref{res:pospjrpejrp} does not immediately transfer to the linear model, since approving additional candidates in $W$ can lead to sub-optimal completions (in terms of satisfying PJR+ and EJR+) if we need to approve candidates outside of $W$ as well. The following example illustrates this.

\begin{table}[t]
    \centering
    \scalebox{1}{
        \begin{tabular}{cccc}
            \toprule
            voter $v$ & $\top(v)$ & $\middle(v)$ & $\bottom(v)$ \\
            \midrule
            1 & $a, b$ & -- & $c$ \\
            2 & $a, b$ & -- & $c$ \\
            3 & -- & $a \rightarrow b$ & $c$ \\
            \bottomrule
        \end{tabular}
    }
    \caption{The incomplete approval profile from \Cref{ex:posPJRP_posEJRP}.}
    \label{tbl:example_posPJRP_necPJRP}
\end{table}

\begin{example}\label{ex:posPJRP_posEJRP}
    Consider the instance depicted in \Cref{tbl:example_posPJRP_necPJRP} using the model of linear incomplete approvals and assume $k=2$. There are 3 voters with incomplete approval preferences over 3 candidates, and the third voter is the only one with uncertainty. 
    The committee $W = \set{b, c}$ satisfies both PJR+ and EJR+ in the completion where $A(3) = \emptyset$. However, if $A(3) = \set{a, b}$, then $W$ satisfies neither PJR+ nor EJR+, since $G = \set{1, 2, 3}$ is a cohesive group that is $2$-large, and $A(v) \cap W = \set{b}$ for every $v \in G$.
\end{example}

\section{Discussion}
\label{sec:conclusion}

We studied computational aspects of approval-based committee voting under incomplete approval information. 
We adopted the poset approval as a general model of incompleteness, along with the 3VA and linear special cases, and we focused on committees of a fixed size. 
We established a quite broad picture of complexity for the problems of determining whether a given set of candidates is a possible or necessary committee for different classes of ABC rules.
Finally, we investigated the question of whether a given committee satisfies, possibly or necessarily, the representation axiom \emph{justified representation} or the stronger proportionality axioms \emph{PJR+} and \emph{EJR+}.
For the three axioms, we show that the questions of necessary representativeness can in fact be solved in polynomial time.
For the question of possible representativeness, we established an efficient algorithm for JR in the poset model, and for PJR+ and EJR+ in the 3VA model. The algorithms for possible and necessary representativeness do not assume the committee size to be fixed.

It seems promising to study the problem of possible and necessary committees also for other ABC rules such as Phragmén's rules \citep{BFJL24a} or the Method of Equal Shares \citep{PPS21b}.
Additional directions for future research include other models of incompleteness, and uncertainty in general, such as a model where voter attendance to the ballot is uncertain \citep{ImKi21a}.

\section*{Acknowledgments}

We thank Jérôme Lang for helpful discussions as well as Phokion Kolaitis and Julia Stoyanovich for insightful suggestions on modeling voter incompleteness. The work of Jonas Israel and Markus Brill was supported by the Deutsche Forschungsgemeinschaft (DFG) under Grant \mbox{BR 4744/2-1}. The work of Aviram Imber and Benny Kimelfeld was supported by the US-Israel Binational Science Foundation (BSF) under Grant 2017753.

\bibliographystyle{plainnat}
\bibliography{abb,algo,References} 

\newpage
\appendix

\section{Proportional Justified Representation and Extended Justified Representation}\label{sec:pjr-ejr}
In this section we study two additional axioms from the justified representation family: PJR (proportional justified representation) and EJR (extended justified representation) \citep{ABC+16a,SFF+17a}. These two axioms differ from PJR+ and EJR+ (discussed in \Cref{sec:proportional-extended}) by using a different notion of cohesiveness. Let $\A$ be a (complete) approval profile and $W\subseteq C$ a committee of size $|W|=k$. A group of voters $G \subseteq V$ is said to be \emph{$\ell$-cohesive} if $|\cap_{v \in G} A(v)| \geq \ell$. Then, $W$ satisfies PJR if for all $\ell \leq k$ and all $G \subseteq V$ that are $\ell$-large and $\ell$-cohesive, it holds that 
\[ \left| W \cap \left( \bigcup_{v \in G} A(v) \right) \right| \geq \ell.\]
$W$ satisfies EJR if for all $\ell \leq k$ and all $G \subseteq V$ that are $\ell$-large and $\ell$-cohesive, it holds that there is at least one $v \in G$ such that
\[ |W \cap A(v)| \geq \ell. \]
We again ask whether a committee possibly or necessarily satisfies these axioms given an incomplete profile. Before we consider the computational complexity of these questions, we add a small example.

\begin{example}\label{ex:posEJR_necEJR}
    Consider again the instance depicted in \Cref{tbl:example_posJR_necJR} with 6 voters and 6 candidates. 
    Again using a committee size of $k = 3$, we can see the following. Both $\set{b,d,e}$ and $\set{a,c,e}$ possibly satisfies PJR and EJR (\eg when voter 4 approves of $a$ and $d$ and all other voters $i \neq 4$ only approve candidates in $\top(i)$) but not necessarily (since voters 1 to 4 form a 2-cohesive group if all of them approve $a$ and $b$). But $\set{a,b,c}$ necessarily satisfies PJR and EJR since every voter is represented at least once and the only 2-cohesive group possible, voters 1 to 4, is represented twice.
\end{example}

In contrast to the axioms we study in \Cref{sec:representation}, already for a complete approval profile~$\A$, it is coNP-complete to decide whether a given committee provides PJR or EJR \wrt $\A$ \citep{ABC+16a,AEH+18a}. 
Therefore, in this section we make the same assumption that committee size $k$ is fixed. We begin by showing that the verification problems become tractable under this assumption. Our proof uses a similar idea as the proof by \citet{AEH+18a} for an analogous result for a fixed number of candidates (instead of a fixed committee size).

\begin{theorem}
\label{thm:testcomkPEJR}
Given a committee $W$ of size $k$ and a (complete) approval profile $\A$, it can be tested in polynomial time whether $W$ satisfies PJR or EJR, for all fixed $k \geq 1$.
\end{theorem}
\begin{proof}
    We prove the result for PJR first and then describe what changes when we consider EJR.
    Assume we are given a committee $W$ with $|W|=k$ and an approval profile $\A$. We now describe how to check if $W$ satisfies PJR with respect to $\A$ in time polynomial in the input size, where we treat $k$ as a constant. First, note that it is sufficient to show that we can check PJR for $\ell$-cohesive and $\ell$-large sets in polynomial time and then iterate over all $\ell \leq k$ to test PJR. We call the property of satisfying PJR for $\ell$-cohesive and $\ell$-large sets $\ell$-PJR. If a committee satisfies $\ell$-PJR for all $\ell \leq k$ it satisfies PJR. We thus fix some $\ell \leq k$.
    
    Now we iterate over all pairs of subsets $S \subseteq C$ of size $\ell$ and $W' \subseteq W$ of size at most $\ell-1$.  Note that the number of these subsets is polynomial since $k$ is fixed and $\ell \leq k$. Given $S$ and $W'$, define 
    \[V_{S, W'} = \set{v \in V : S \subseteq A(v) \land A(v) \cap W \subseteq W'}\]
    as the group of voters that approve all candidates of $S$, and among the candidates in $W$, only approve candidates in $W'$. We have $|\bigcap_{v \in V_{S, W'}} A(i)| \geq |S| = \ell$ and $\left| W \cap \left( \bigcup_{v \in V_{S, W'}} A(v) \right)\right| \leq |W'| < \ell$, \ie $V_{S, W'}$ is $\ell$-cohesive and not represented by $W$. Observe that $W$ violates $\ell$-PJR if and only if there is at least one pair $S, W'$ such that $V_{S, W'}$ is $\ell$-large.   
    Further, $W$ violates PJR if and only if there is at least one such $\ell$ for which $\ell$-PJR is violated.

    The algorithm for EJR is slightly simpler. For each $\ell \leq k$, we  iterate over each set $S \subseteq C$ of size $\ell$ (no need for $W'$) and define
    \[V_S = \set{v \in V : S \subseteq A(v) \land |A(v) \cap W| < \ell} \,.\]
    We can show that $W$ satisfies EJR if and only if there exists $\ell \leq k$ and $S$ for which $V_S$ is $\ell$-large, by a similar logic to the proof for PJR.
\end{proof}

\eat{
\begin{proof}
    We prove the result for PJR first and then describe what changes when we consider EJR.
    Assume we are given a committee $W$ with $|W|=k$ and an approval profile $\A$. We now describe how to check if $W$ satisfies PJR with respect to $\A$ in time polynomial in the input size, where we treat $k$ as a constant. First, note that it is sufficient to show that we can check PJR for $\ell$-cohesive and $\ell$-large sets in polynomial time and then iterate over all $\ell \leq k$ to test PJR. We call the property of satisfying PJR for $\ell$-cohesive and $\ell$-large sets $\ell$-PJR. If a committee satisfies $\ell$-PJR for all $\ell \leq k$ it satisfies PJR. We thus fix some $\ell \leq k$.
    The number of subsets of size $\ell$ of the candidates $C$ satisfies
    \[ {\binom{m}{\ell}} = \frac{m!}{\ell! (m - \ell)!} < m^\ell \leq m^k. \]
    Now we iterate over all these subsets $S \subseteq C$ of size $\ell$ and for each subset define $V_S = \set{i \in V : S \subseteq A(i)}$ as the group of voters that approve of all candidates in $S$. Clearly we have $|\bigcap_{i \in V_S} A(i)| \geq \ell$. The committee $W$ violates $\ell$-PJR if and only if there is at least one $S$ such that $|V_S| \geq \ell \cdot \nicefrac{n}{k}$ but $\left|W \cap \left( \bigcup_{i \in V_S} A(i) \right)\right| < \ell$. \aviram{If $V_S$ is not represented by $W$ then we have a ``no'' instance, but why is the other direction true? We might have $|V_S| \geq \ell \cdot \nicefrac{n}{k}$ with $\left|W \cap \left( \bigcup_{i \in V_S} A(i) \right)\right| \geq \ell$ but a subset $B \subseteq V_S$ with $|B| \geq \ell \cdot \nicefrac{n}{k}$ and $\left|W \cap \left( \bigcup_{i \in B} A(i) \right)\right| < \ell$. For example, assume that $|V_S| = \ell \cdot \nicefrac{n}{k} + 1$, a single voter $v \in V_S$ approves $\ell$ candidates from $W$, the others do not approve any candidates from $W$. $V_S$ is represented by $W$ but $V_S \setminus \{v\}$ is not.} Further, $W$ violates PJR if and only if there is at least one such $\ell$ for which $\ell$-PJR is violated.
    Concerning the running time of this procedure note that we have to iterate over all $\ell \leq k$ and for each of those values consider all $\ell$-element subsets $S$ of $C$ (which can be done in time strictly less than $m^k$ as described above). Finding $V_S$ for a given $S$ can be done in time $m^2\cdot n$ and finding $\bigcup_{i \in V_S} A(i)$ needs at most $m\cdot n$ operations, which can be neglected. Taking these values together we obtain the desired upper bound.
    
    The only thing that changes when considering EJR is that for each $\ell \leq k$ and each set $S$ we do not have to check $\left|W \cap \left( \bigcup_{i \in V_S} A(i) \right)\right|$ but instead whether there is at least one voter $i \in V_S$ with $|A(i) \cap W| \geq \ell$ if $V_S$ is large enough. This can be done in time $m^2\cdot n$ and does not change the given bound on the running time.
\end{proof}
}

\Cref{thm:testcomkPEJR} gives rise to computational questions regarding PJR and EJR under incompleteness: Given a committee $W$ and a partial approval profile $\P$ and assuming $k$ is fixed, can we decide in polynomial time whether $W$ satisfies PJR or EJR for any or all of the completions of $\P$? We answer the question affirmatively for the \emph{necessary} part, for every fixed $k \geq 1$. 
We call the corresponding problems $\necpjrpar{k}$ and $\necejrpar{k}$, respectively.

\begin{theorem}\label{res:necpjr}
    In the poset approval model, $\necpjrpar{k}$ is solvable in polynomial time, for all fixed $k \geq 1$.
\end{theorem}
\begin{proof}
We start with proving an auxiliary claim:
Assume $W$ fails to provide PJR on some completion $\A$ because there is an $\ell$-large, $\ell$-cohesive voter group $G \subseteq V$ with $|\bigcup_{v \in G} A(i) \cap W| < \ell$. Then $W$ also fails to provide PJR on any completion $\A'$ obtained from $\A$ where some voters additionally approve of candidates of $(C \setminus W) \cup \bigcup_{v \in G} A(v)$, \ie additionally approve of any candidates that are already approved by a voter in $G$ or are not in $W$. This is because the cohesiveness of sets in $\A'$ can only increase compared to $\A$ and $G$ still has less than $\ell$ candidates in $\bigcup_{v \in G} A'(v) \cap W$.

The algorithm for $\necpjrpar{k}$ works as follows. For each $W' \subseteq W$ we consider the completion $\A_{W'}$ where every voter approves of as many candidates in $C\setminus W'$ as possible without approving of any additional $c \in W'$. This can be done efficiently by iterating for each voter $v$ over the candidates in $\middle(v)$ and checking for each candidate $c \in \middle(v)$ whether $c$ can be approved without needing to approve a candidate from $W'$. If so, the candidate approves $c$.
We then check whether $W$ satisfies PJR in $\A_{W'}$ as is described in \Cref{thm:testcomkPEJR}. The algorithms terminates and outputs \emph{no} as soon as a PJR violation is found and terminates and outputs \emph{yes} if all $W' \subseteq W$ passed the check. Since $k$ is fixed, we can iterate over all subsets of $W$ in polynomial time.

For the correctness of the algorithm first note that for $W$ to violate PJR in a completion $\A$ there has to be $W' \subset W$, $G \subseteq V$ and integer $\ell \leq k$ such that $G$ is $\ell$-large, $\ell$-cohesive, and $W' = W \setminus \bigcup_{v \in G} A(v)$ satisfies $|W'| \geq k-\ell$.
(This is equivalent to $|W \cap \bigcup_{v \in G}| < \ell$.) In that case we say $W'$ is a \emph{witness for PJR failure (\wrt $G$ and $\ell$)}.
If no $W' \subseteq W$ is such a witness (for any $G$ and $\ell$) then $W$ satisfies PJR.
Now fix some $W' \subset W$ and let $\A_{W'}$ be the completion described in the algorithm above. We will now show that if there is a completion $\A$ for which $W'$ is a witness \wrt some $G$ and $\ell$ then it is also a witness \wrt the same $G$ and $\ell$ on $\A_{W'}$.

So assume that there is some $\A$ where $W$ fails PJR and $W'$ is a witness \wrt some $G$ and $\ell$. Let $v \in G$. By the definition of witness above, $\middle_\A(v) \cap W' = \emptyset$. In $\A_{W'}$, $v$ approves as many candidates outside $W'$ as possible without approving any additional $c \in W'$, hence $A(v) \subseteq A_{W'}(v)$ and $A_{W'}(v) \setminus A(v) \subseteq C \setminus W'$.
We get that when we change $A(v)$ to $A_{W'}(v)$, we additionally approve candidates that are already approved by a voter in $G$ or a not in $W$.
Further note that for all $v \notin G$ we can change the approvals in any way without changing the fact that $W'$ is a witness of PJR failure \wrt $G$ and $\ell$.
This together with the claim we proved in the beginning implies the desired result: It is enough to check one specific completion for each $W' \subseteq W$ when searching for possible PJR failures of $W$.
\end{proof}

\begin{theorem}\label{res:necejr}
    In the poset approval model, $\necejrpar{k}$ is solvable in polynomial time, for all fixed $k \geq 1$.
\end{theorem}
\begin{proof}
The algorithm to solve $\necejrpar{k}$ in polynomial time works as follows. We iterate over all subsets of candidates $C' \subseteq C$ with $|C'| \leq k$. For $C'$ we define a completion $\A_{C'}$ where each voter approves of as many candidates in $C'$ as possible (and whatever candidates are necessary to approve the candidates in $C'$) but not more.
We then check whether $W$ satisfies EJR in the completion $\A_{C'}$ as described in \Cref{thm:testcomkPEJR}.
The algorithm terminates and outputs \emph{no} as soon as an EJR violation is found and terminates and outputs \emph{yes} if $C' \subseteq C$ with $|C'| \leq k$ passed the check. As in the algorithm of \Cref{res:necpjr}, the iteration exploits the fact that $k$ is fixed.

For the correctness of the algorithm assume there is a completion $\A$ where $W$ fails EJR, \ie there is $\ell \leq k$ and $G \subseteq V$ that is $\ell$-large, $\ell$-cohesive with $|W \cap A(v)| < \ell$ for all $v \in G$. Let $C'$ be some subset of $\bigcap_{v \in G} A(v)$ of size $\ell$. (It exists since $G$ is $\ell$-cohesive.) We show that $W$ then also fails EJR in the $\A_{C'}$. To that end, note that for every $v \in G$ we have $C' \subseteq A_{C'} \subseteq A(v)$ and $G$ is $\ell$-cohesive. Therefore $G$ is $\ell$-cohesive in $\A_{C'}$, and the representation of $G$ by $W$ cannot increase. Thus $W$ fails EJR in $\A_{C'}$.
This proves that checking $\A_{C'}$ for every $C' \subseteq C$ of size at most $k$ is enough when searching for EJR violations of $W$.
\end{proof}

The complexity of $\pospjrpar{k}$ and $\posejrpar{k}$ remains open in all three models, including the 3VA model in which $\pospjrp$ and $\posejrp$ are solvable in polynomial time (see \Cref{res:pospjrpejrp}). The greedy approach of \Cref{res:pospjrpejrp} does not work for PJR and EJR, because approving additional candidates in $W$ is not necessarily beneficial to the satisfaction of PJR and EJR (in contrast to PJR+ and EJR+, where approving additional candidates in $W$ is problematic only if it requires us to approve candidates outside of $W$). The following example illustrates a case of this type.

\begin{table}[t]
    \centering
    \scalebox{1}{
        \begin{tabular}{cccc}
            \toprule
            voter $v$ & $\top(v)$ & $\middle(v)$ & $\bottom(v)$ \\
            \midrule
            1 & $a, b$ & -- & $c$ \\
            2 & $a, b$ & -- & $c$ \\
            3 & $a$ & $b$ & $c$ \\
            \bottomrule
        \end{tabular}
    }
    \caption{The incomplete approval profile from \Cref{ex:posEJR_posEJR}.}
    \label{tbl:example_posPJR_necPJR}
\end{table}

\begin{example}\label{ex:posEJR_posEJR}
    Consider the instance depicted in \Cref{tbl:example_posPJR_necPJR}. We can use either of the three models, since $\middle(v) = \emptyset$ or $|\middle(v)| = 1$ for every voter. Here three voters have incomplete approval preferences over three candidates, and there are two possible completions: $\A$ where $A(3) = \set{a}$ and $\A'$ where $A'(3) = \set{a, b}$.
    For the committee $W = \set{b, c}$, we can see the following. $W$ satisfies both PJR and EJR in $\A$, but it does not satisfy PJR and EJR in $\A'$: in $\A'$ the voter group $G = \set{1, 2, 3}$ is $2$-cohesive, $2$-large, and $A'(v) \cap W = \set{b}$ for every $v \in G$.
    Note that $\A'$ is the completion that the algorithm of \Cref{res:pospjrpejrp} would construct for $W$.
\end{example}

\end{document}